\documentclass[journal]{IEEEtran}
\usepackage{amsthm}
\usepackage{amssymb}
\usepackage{graphicx}
\usepackage{amsmath}
\usepackage{newtxmath}
\usepackage{comment}
\usepackage[cal=cm]{mathalpha} 
\usepackage{tablefootnote} 
\usepackage{threeparttable} 
\usepackage{bm}
\newtheorem{theorem}{Theorem}{}
{}
{}
{}
{}
{}

\usepackage{algorithm}
\usepackage{algpseudocode}
\usepackage{graphics}
\usepackage{epsfig}
\usepackage{multirow}
\usepackage{caption}
\usepackage{array}
\usepackage{cite}
\usepackage{stfloats}
\usepackage{booktabs}
\usepackage{color}
\usepackage{tabu} 
\usepackage{stfloats}
\usepackage{mathtools}
\usepackage{subcaption}
\usepackage{stmaryrd} 
\makeatletter
\let\sum\relax
\let\prod\relax
\DeclareSymbolFont{largesymbols}{OMX}{cmex}{m}{n}
\DeclareMathSymbol{\sum}{\mathop}{largesymbols}{"50}
\DeclareMathSymbol{\prod}{\mathop}{largesymbols}{"51}
\usepackage[cal=cm]{mathalpha} 
\usepackage[colorlinks,linkcolor=blue,anchorcolor=blue,citecolor=blue,bookmarks=true]{hyperref}
\usepackage[capitalize]{cleveref} 
\crefname{figure}{Fig.}{Figs.}   
\Crefname{figure}{Fig.}{Figs.}   

\DeclareMathOperator*{\argmax}{argmax}
\DeclareMathOperator*{\argmin}{argmin}

\DeclareMathOperator*{\var}{var}
\DeclareMathOperator*{\cov}{cov}

\DeclareMathOperator*{\diag}{\text{diag}}

\DeclareMathOperator*{\rect}{\text{rect}}

\begin{document}
	\title{Tensor Decomposition-Based Wireless Sensing for MIMO-OFDM ISAC via Flexible Spatial-Temporal-Spectral Optimization}
	
	\author{\IEEEauthorblockN{
			Chengzhi Ye, \IEEEmembership{Student Member,~IEEE}, 
			Ruoyu Zhang, \IEEEmembership{Senior Member,~IEEE}, 
			Lei Yao, Xinrong Guan, Yu Zhang, 
			Wen Wu, 
			\IEEEmembership{Senior Member,~IEEE}, 
			Rui Zhang, \IEEEmembership{Fellow,~IEEE}
		}
		
		\thanks{
			Chengzhi Ye, Ruoyu Zhang, Lei Yao and Wen Wu are with the Key Laboratory of Near-Range RF Sensing ICs and Microsystems (NJUST), Ministry of Education, School of Electronic and Optical Engineering, Nanjing University of Science and Technology, Nanjing 210094, China. Chengzhi Ye and Lei Yao are also with Qian Xuesen College, Nanjing University of Science and Technology, Nanjing 210094, China (e-mail  qxsycz8166@njust.edu.cn; ryzhang19@njust.edu.cn;  yaolei@njust.edu.cn; wuwen@njust.edu.cn).		
			{Xinrong Guan is with the College of Communications
			Engineering, Army Engineering University of PLA, Nanjing 210007, China
			(e-mail: guanxr@aliyun.com).}
			{Yu Zhang  is with The Sixty-Third Research Institute,
			National University of Defense Technology, Nanjing 210007, China. (e-mail: zhyu63@163.com).
			Rui Zhang is with the Department of Electrical and
			Computer Engineering, National University of Singapore, Singapore 117583
			(e-mail: elezhang@nus.edu.sg).}
			\textit{(Corresponding author: Ruoyu Zhang)}
		}
		\vspace{-1.75em}
	}
	
	
	\maketitle
	
	

	%
	\IEEEpeerreviewmaketitle

	\begin{abstract}
		Integrated sensing and communication (ISAC) is regarded as a key enabling technique in future 6th-generation (6G) mobile communication systems. However, existing multi-input multi-output (MIMO) orthogonal frequency division multiplexing (OFDM) ISAC designs generally rely on the fixed-position antennas and fixed allocation of time-frequency resources, thereby limiting the degrees of freedom of wireless sensing along the spatial-temporal-spectral dimensions. 
		In this paper, we propose a novel wireless sensing framework for MIMO-OFDM ISAC systems with flexible spatial-temporal-spectral optimization and propose a tensor decomposition-based approach to estimate target parameters,  including azimuth/elevation angles, ranges, and velocities. 
		Specifically, we first establish a monostatic wireless sensing model for MIMO-OFDM ISAC systems, where the positions of antenna elements, the allocation of OFDM symbols and subcarriers can be flexibly configured.
		Then, we formulate the problem of estimating target parameters as a tensor decomposition problem admitting to the canonical polyadic format, which enables the parallel target parameters estimation process from corresponding factor matrices along the spatial, temporal, and spectral dimensions, respectively.
		Based on the decomposed factor matrices, we derive the Cramér-Rao Bound (CRB) for the unknown target parameters and reveal that the estimation accuracy of azimuth/elevation angles, velocities and ranges is fundamentally determined by the array geometry, the distribution of OFDM symbols and subcarriers. 
		Building on this insight, we obtain an optimized solution for the positions of antenna elements, and optimal solutions for the subcarrier allocation and OFDM symbol allocation to minimize the CRB, as well as the mean square error of target parameters estimation. 
		Numerical results demonstrate that the proposed tensor decomposition-based wireless sensing with spatial-temporal-spectral optimization can significantly reduce both the CRB and the actual {root mean square error} of wireless sensing compared to conventional MIMO-OFDM ISAC systems employing fixed-position antennas and static time-frequency resource allocation.
		
	\end{abstract}
	
	\begin{IEEEkeywords}
	Multiple-input multiple-output (MIMO), orthogonal frequency-division multiplexing (OFDM), integrated sensing and communication (ISAC), tensor decomposition, parameter estimation, movable antenna (MA).
\end{IEEEkeywords}
\section{Introduction}

Integrated sensing and communication (ISAC), which serves as a unified platform for fully integrating radar sensing and communication capabilities, has been recognized as a fundamental enabler for sixth-generation (6G) wireless systems \cite{TowardCrowdRenZhihui2025,NineChallengesTongWen2022,EfficientActiveMolaei2025}. 
The realization of ISAC has been facilitated by advanced multicarrier modulation techniques such as orthogonal frequency-division multiplexing (OFDM), which enhances spectral efficiency and robustness against multipath fading, along with other key enabling technologies \cite{ISACSurveyXingzhouWen2022,ASurveyLiuAn2022}. 
Meanwhile, multiple-input multiple-output (MIMO), which utilizes multiple antennas at both the transmitter and receiver to enable spatial multiplexing and diversity, has emerged as one of the most promising technologies for next-generation mobile communication networks \cite{ISACsurveyRuoyuZhang2025,RZhangTerahertzIntegrated2025}. 

Leveraging the spatial degrees of freedom and inherent waveform diversity, the integration of MIMO and ISAC can simultaneously achieve ultra-reliable communication and wireless sensing, thereby facilitating a wide range of advanced applications \cite{MultibeamZhangJ2019,LiPeishiMOI2025,MultiObjectiveWangPeng2025}. 
To achieve this goal, a novel multibeam framework was proposed in \cite{MultibeamZhangJ2019} with a steady subbeam dedicated to communication and a fully steerable one enabling uninterrupted scanning for sensing.
The authors in \cite{10556732} investigated transmit beamforming designs for MIMO ISAC systems by proposing a multiobjective optimization framework to achieve a fair performance tradeoff between multiuser communications and multitarget sensing. Recent ISAC works
developed a joint design methodology that can strike a scalable
tradeoff between the dual functionalities via exploiting the
extra flexibility and the degrees of freedom offered by MIMO
techniques \cite{LiuFanToward2025,LiuJointBeamforming2020}.
However, the aforementioned MIMO ISAC systems generally
configure the fixed-position antennas for both sensing and communication, failing to fully exploit the wireless signal variation in a given transmitter/receiver region and thereby cannot effectively adapt to the ISAC requirements in wireless networks.

In recent years, movable antennas (MAs)/fluid antenna
systems (FAs) have attracted considerable research interest due to their unique capability to flexibly optimize wireless channels through intelligent antenna repositioning \cite{zhuMovableAntennasWireless2024,MovableAntennaNingBoyu2025}. Unlike conventional fixed-position antenna arrays, MA technology introduces an additional dimension of flexibility by enabling physical antenna movement within a certain spatial region, thereby fully exploiting the available spatial degrees of freedom \cite{REMAAChenKangjian2025,ChengzhiYeWCL2026,maMIMOCapacityCharacterization2024}. 
As for wireless communications, the authors in \cite{MAEnhancedZhuLipeng2024} proposed multi-directional descent algorithms to jointly optimize antenna positions and transmit power, {thereby achieving} a substantial reduction in the total transmit power. \cite{GloballyOptimalWuYifei2025} proposed a joint beamforming and MA positioning scheme for minimizing the BS power consumption under user requirements, which is suitable for both ideal and non-ideal channel state information (CSI). The novel algorithm in \cite{GuangyiEnergyEfficiency2025} jointly optimizes beamforming and antenna positioning, which significantly improves the energy efficiency of movable antenna systems. Joint antenna position and rotation optimization was also considered in \cite{Xiaodan6DMA2025}, 
which can fully exploit both the spatial and rotational degrees of freedom of MAs. As for sensing and ISAC, recent works also demonstrate the performance improvement of MAs for pushing
the theoretical limits of sensing via optimizing spatial positioning \cite{PositionBFISAC2025,MAAISAC2025,CRBMinMAA2025,maMovableAntennaEnhanced2024}. 
For instance, the authors in \cite{PositionBFISAC2025} optimize movable antenna positions to minimize the Cramér-Rao Bound (CRB)  for ISAC systems and achieve significant performance gains over fixed antennas. The work in \cite{CRBMinMAA2025} minimizes the CRB for angle estimation in a multiuser ISAC system by jointly optimizing movable antenna positions and beamforming. An MA array-based wireless sensing system was proposed in \cite{maMovableAntennaEnhanced2024}, which significantly {reduces} the CRB and practical mean square error for direction of arrival (DOA) estimation. Although the amalgamation of MA with ISAC has demonstrated significant performance improvement for communication and sensing, the aforementioned works {did ont investigate} how to realize wireless sensing via estimating the information of target parameters in practical systems.

In fact, the target parameter estimation problem
has been extensively studied in the field of conventional radar sensing and recently investigated in ISAC systems\cite{PassiveSensingISAC2025, SensingMUSIC2025,MolaeiISACPE2025}. The authors in \cite{PassiveSensingISAC2025} proposed a robust multi-transmitter passive sensing system for near-field ISAC to enhance sensing performance under imperfect CSI. The study in \cite{SensingMUSIC2025} {proposed} a decoupled multiple signal classification (MUSIC) algorithm for near-field mobile target sensing, which achieves benchmark performance at a drastically reduced computational cost. A high-accuracy parameter estimation method {was} introduced in \cite{MolaeiISACPE2025} for dual-function radar-communication systems using fourth-order statistics and improved matched-filtering.
The work in \cite{SuperResolutionISACOFDM2020} capitalized on the translational invariance structure of OFDM signals across both pulse and frequency domains to achieve auto-paired range and velocity estimation. Additionally, \cite{PerformanceVelocity2021} presented a two-dimensional MUSIC algorithm integrated with a carefully designed smoothing window to mitigate the computational burden. To further improve the sensing estimation accuracy, the authors in \cite{JointRange2022Fuqiang} considered intrapulse and intersubcarrier Doppler effects, leading to a refined maximum likelihood estimation scheme. Building on this, the work in \cite{KeskinJointMIMOOFDM} tackled the joint estimation of delay, Doppler, and angle parameters for multiple targets in MIMO-ISAC systems. Expanding the scope to system architecture, \cite{FrameworkMobile2020} proposed integrating radar sensing into mobile networks and developed direct/indirect schemes for estimating azimuth, range, and velocity via one-dimensional compressive sensing. The authors in \cite{JointTargetZhouLei2025} proposed a joint target detection and multi-user downlink channel estimation method based on sparse Bayesian learning. 

Recently, the tensor-based signal processing has been deemed as a promising solution for wireless sensing, such as DOA estimation, channel estimation, MIMO radar \cite{TensorMIMOradar2025,TensorDMachineLearning2025,Leiyao2025CP,ye2026rotatableantennaenhancedwirelesssensing}, and has  been
successfully applied for ISAC systems \cite{TensorDISAC2025}. 
Different from the aforementioned compressed sensing-based
methods that rely on constructing high-dimensional
dictionary for sparse representation of channels, tensor-based
processing has become one of promising tools for exploiting
the multidimensional characteristics of high-dimensional channel matrices \cite{ChannelEstimationRuoyuZhang2024,RuoyuZhangISACwithMIMO2024,RuoyuZhangChannelTraining2025,TangjunCooperativeISAC2025}. For example, the authors in \cite{RuoyuZhangISACwithMIMO2024} developed a unified tensor model for integrated wireless communication and sensing. In another study, the authors in \cite{RuoyuZhangChannelTraining2025} introduced a tensor-based channel training-assisted sensing framework for terahertz MIMO-ISAC systems. This approach employs tensor decomposition to jointly estimate both channel and target parameters, greatly reducing training overhead while enabling high-precision sensing. Additionally, \cite{TangjunCooperativeISAC2025} proposed a cooperative ISAC scheme for low-altitude drones, which can achieve high-precision estimation of drone parameters in multi-base station scenarios.

From the discussion of aforementioned works, while MA systems can optimize the spatial positions of antennas to enhance wireless sensing performance, and tensor decomposition can improve the estimation accuracy of multiple parameters, there still remain the following two limitations: 1) Conventional {MA-enhanced} ISAC systems only consider the spatial distribution of antennas, which is primarily limited to the spatial-domain optimization and fails to enhance the wireless sensing performance of other parameters beyond angles. 2)  Existing tensor {decomposition-based} algorithms in MIMO-OFDM ISAC {systems operate on} a fixed resource allocation model for sensing and {did not} address the question of how to maximize sensing performance under the general spatial-temporal-spectral resource constraints. To address the limitations above, we develop in this paper a novel wireless sensing framework for MIMO-OFDM ISAC systems with flexible spatial-temporal-spectral optimization and propose a tensor {decomposition-based} approach to estimate target parameters, including azimuth/elevation angles, ranges, and velocities. The specific contributions can be summarized as follows:
	
	\begin{enumerate}
		\item {Firstly, we propose a novel framework for enhanced wireless sensing in MIMO-OFDM ISAC systems, which establishes a monostatic sensing model with flexible spatial-temporal-spectral optimization of antenna positions, OFDM symbols, and subcarriers, where the base station deploys MAs to collect the echo signals and performs target sensing within the limited coherence time and signal bandwidth.  The positions of antenna elements, the allocation of OFDM symbols and subcarriers can be flexibly configured to enhance the target parameter estimation performance.}
		
		\item {We formulate the problem of target sensing as a novel tensor decomposition problem admitting to the canonical polyadic decomposition (CPD) format to accommodate flexible spatial-temporal-spectral configurations. Based on the decomposed factor matrices, we transform the joint estimation of target parameters into parallel estimation processes along the spatial, temporal, and spectral dimensions respectively.}
		
		\item{We derive the CRBs for the unknown target parameters and reveal that the estimation
			accuracy of azimuth/elevation angles, velocities and ranges is fundamentally
			determined by the array geometry, the distribution
			of OFDM symbols and subcarriers. Building on this insight, we
			obtain a globally optimal solution for the positions of antenna
			elements, subcarrier allocation, and OFDM symbol allocation
			to minimize the CRB.}
		
		\item Finally, we conduct sufficient simulation experiments to  present the superiority of our proposed method. The simulation results demonstrate that the performance of the proposed flexible spatial-temporal-spectral optimization via tensor decomposition performs significantly better than the conventional tensor {decomposition-based} method and the conventional subspace-based method, as well as the {root mean square error (RMSE)} of target parameters is closer to CRB.
	\end{enumerate}
	

	Notations: Lowercase letter, boldface lowercase letter, boldface uppercase letter and calligraphy uppercase letter denote the scalars, vectors, matrices and tensors, respectively. The conjugate, transpose, conjugate transpose, pseudo-inverse, mathematical expectation, and inverse are represented by  $(\cdot)^*$, $(\cdot)^{\text{T}}$, $(\cdot)^{\text{H}}$, $(\cdot)^{-1}$  $(\cdot)^{\dagger}$, and $\mathbb{E}\{\cdot\}$. 
	The outer product of vectors is given by $\circ$. The Kronecker product, Khatri-Rao product are denoted by $\otimes$, $\odot$. The expression $\diag(\bm{b})$ denotes the diagonal matrix formed by using vector $\bm{b}$ as its diagonal elements. The expression $ [\bm{a}]_{k}$ and $[\bm{W}]_{p,q}$ denotes the $k$-th element of the vector $\bm{a}$ and the $i$-th row and the $j$-th column of the matrix $\bm{W}$, respectively. 
	The expression $\var({\mathcal{X}})$ denotes the variance of the set $\mathcal{X}$ and $\var(\mathcal{X}) = \frac{1}{N}\sum_{n=1}^{N}x_n^2 - ( \frac{1}{N}\sum_{n=1}^{N}x_n )^2$. The expression $\cov(\mathcal{X},\mathcal{Y})$ denotes the covariance of the set $\mathcal{X},\mathcal{Y}$, and $\cov(\mathcal{X},\mathcal{Y}) = \frac{1}{N}\sum_{n=1}^{N} [ ( x_n - \frac{1}{N}\sum_{n=1}^{N}x_n ) ( y_n - \frac{1}{N}\sum_{n=1}^{N}y_n ) ]$. The set of positive integers is denoted by $\mathbb{Z}^+$. The real part and the imaginary part of the complex matrix or complex number are expressed by $\mathfrak{R}[\cdot]$ and $\mathfrak{I}[\cdot]$, respectively. 
	The $\ell_2 $ norm of a vector is denoted by $\|\cdot\|_2$. 
	The Frobenius norm of a matrix is expressed as $\|\cdot\|_F$.
	The flooring of positive integer $m$ is denoted by $\lfloor m \rfloor$.
	
	\section{System model}
	In this section, as shown in \Cref{Fig_BSUE}, we present the system model for communication and radar sensing in MIMO-OFDM ISAC systems.
	\subsection{System Description}
	As shown in \Cref{Fig_BSUE}, the transmit part of the base station (BS) is equipped with a uniform rectangular array (URA) with $N_{\text{bs},x}$ and $N_{\text{bs},y}$ fixed-position antennas located in the $x$ and $y$ directions, such that $N=N_{\text{bs},x} N_{\text{bs},y}$. The receive part of the BS is equipped with $N_{{\text{re}}}$ MAs, whose coordinates are expressed as $\mathcal{X}=\{x_1,x_2,\ldots,x_{N_{{\text{re}}}}\}$ and $\mathcal{Y}=\{y_1,y_2,\ldots,y_{N_{{\text{re}}}}\}$. We consider the MAs can be flexibly adjusted within the two-dimensional movable region $\mathcal{C}$, which is assumed to be a rectangular region of size $A_x\times A_y$. We utilize OFDM-based ISAC signals to realize communication and sensing, which consist of $K_0$ total subcarriers and $M_0$ total OFDM symbols, whose sets are respectively given by ${\mathcal{F}_0=\{1,\ldots,K_0}\}$ and ${\mathcal{T}_0=\{1,\ldots,M_0}\}$. For wireless sensing, we select a set $\mathcal{T}$ of $M$ OFDM symbols from $\mathcal{T}_0$ and a set $\mathcal{F}$ of $K$ subcarriers from $\mathcal{F}_0$, such that $\mathcal{T}\subseteq \mathcal{T}_{0}$ and $\mathcal{F}\subseteq \mathcal{F}_{0}$. In this paper, we focus on simultaneously sensing $Q$ targets by exploiting spatial-temporal-spectral resources, whose distribution directly influences the sensing performance. It should be noted that the positions of the MAs, as well as the distribution of OFDM symbols and subcarriers, can be flexibly adjusted. Thus, we perform the flexible spatial-temporal-spectral optimization for $\mathcal{X},\mathcal{Y},\mathcal{F}$, and $\mathcal{T}$ jointly.
	
	\subsection{Signal Model}
	\begin{figure}
		\centering
		\includegraphics[width=3in]
		{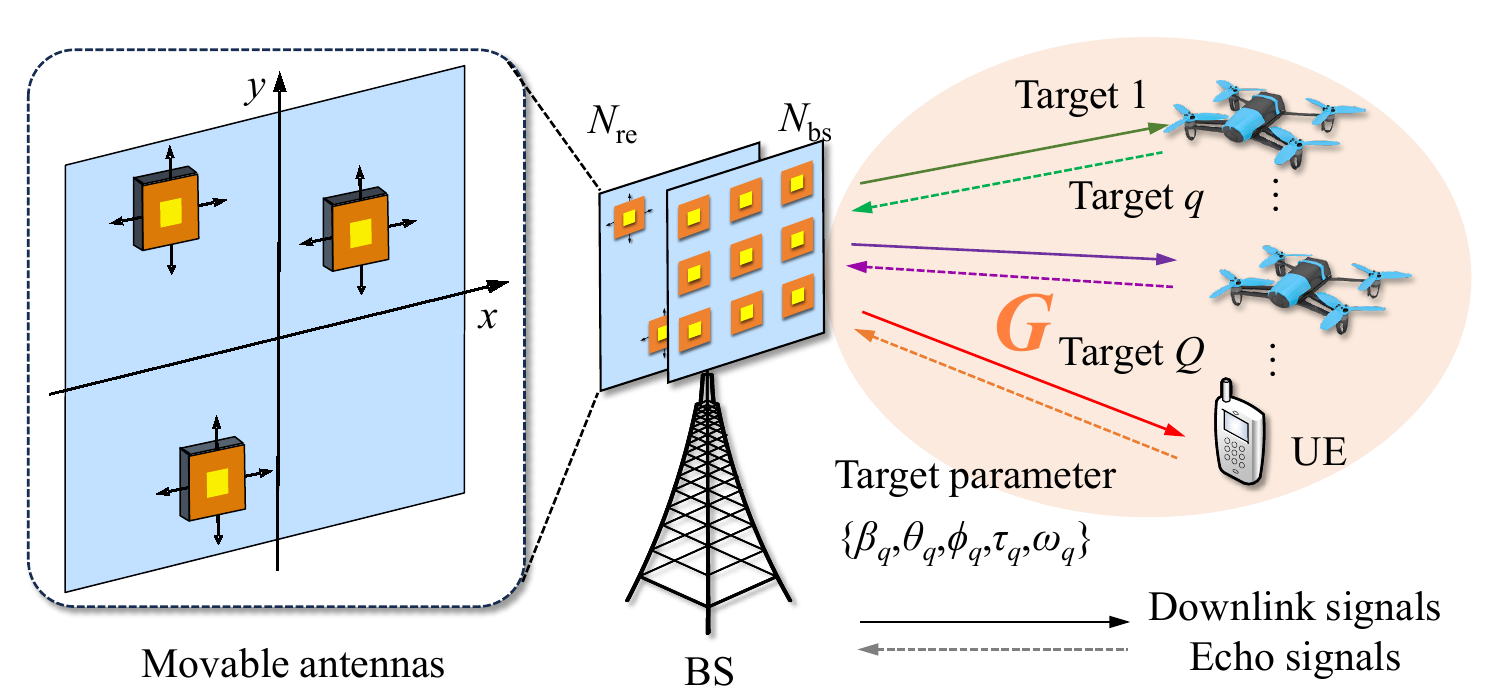}
		\caption{An illustrative MIMO-OFDM ISAC system model, where the BS is equipped with MAs.}
		\label{Fig_BSUE}
		\vspace{-0.5em}
	\end{figure}
	In this subsection, the signal model for wireless sensing in  MIMO-OFDM ISAC is presented. The complex baseband time-domain transmit signal corresponding to the $m$-th OFDM symbol for the $q$-th target is given by
\begin{align}
		\!\tilde{s}_{m}(t)\!=\!\frac{1}{\sqrt {K_0}}\sum_{k=1}^{K_0}\!\chi_{m,k}\text{e}^{\text{j}2\pi (k-1)\Delta f t}\rect\big(\frac{t-m{T}_{\text{sym}}}{T_{\text{sym}}}\big),
	\end{align}
	where $\chi_{m,k}\in \mathbb{C}$ denotes the transmitted payload data symbol at the $k$-th subcarrier of the $m$-th OFDM symbol and $|\chi_{m,k}|^{2}=1$. $\Delta f=f_{\text{s}}/{K_0}$ denotes the subcarrier spacing. $\rect(t)$ denotes the transmit pulse shaping filter. $T_{\text{sym}}=T_{\text{eff}}+T_{\text{cp}}$ represents the duration of an effective OFDM symbol and $T_{\text{cp}}$ is the cyclic-prefix duration. Then, the transmitted frequency domain signal vector can be expressed as
{	\begin{align}
		\bm{s}_{m,k}=\bm{p}_m\chi_{m,k},
	\end{align}}
	where $\bm{p}_{m}\in \mathbb{C}^{N_{\text{bs}}\times 1}$ denotes the transmit beamforming vector. After removing the cyclic-prefix and performing discrete Fourier transform at the base station, the echo signal can be expressed as \cite{RuoyuZhangISACwithMIMO2024}
	\begin{align}
		\tilde{\bm{z}}_{m,k}=\bm{G}_{m,k}{\bm{s}}_{m,k}+\tilde{\bm{n}}_{m,k}\label{modely},
	\end{align}
	where $\bm{G}_{m,k}$ denotes the sensing channel matrix at the $k$-th subcarrier of and $m$-th OFDM symbols. $\tilde{\bm{n}}_{m,k}\sim\mathcal{CN}\big( \bm{0}_{N_{\text{\text{re}}}},\sigma_{n}^{2}\bm{I}_{N_{\text{\text{re}}}\times N_{\text{\text{re}}}} \big)$ denotes additive white Gaussian noise, and $\sigma_{n}^{2}$ denotes the power of noise. It should be noted that the echo signal in \eqref{modely}, composed of multiple OFDM symbols and subcarriers, cannot be naturally decomposed into separate dimensions, thus not inherently conforming to typical tensor models \cite{Cheng2017Probabilistic}. 
	 After removing the pilot symbols at the receiver, the received signal can be further expressed as 
	\begin{align}
		\bm{z}_{m,k} &= \chi_{m,k}^{*}\tilde{\bm{z}}_{m,k} = \bm{G}_{m,k}{\bm{p}}_{m}+{\bm{n}}_{m,k}  \label{ymk},
	\end{align}
	where $\bm{n}_{m,k}=\chi_{m}^{*}\tilde{\bm{n}}_{m,k}$ denotes the corresponding noise vector. In the next subsection, we will demonstrate the sensing channel in particular.
	
	\subsection{Sensing Channel Model}
	Given that all signals are transmitted and received from the same BS, their angles of arrival and departure are geometrically identical for the $Q$ mutually uncorrelated targets assumed to be located in the far-field region. The time and delay domain sensing channel can be expressed as
\begin{align}
	\!\bm{G}(t,\tau)=\sum\limits_{q=1}^{Q}{{\beta}_{q}\varrho(\tau-\tau_q){\bm{a}}_{\text{\text{re}}}({\vartheta}_{q},\!{\varphi}_{q})}\bm{a}_{\text{bs}}^{\text{T}}({\vartheta}_{q},\!{\varphi}_{q}){\text{e}^{\text{j}2\uppi {\omega}_{q}t}},
\end{align}
	where $\varrho(t)$ denotes the impulse function. The complex reflection coefficient ${\beta}_{q}$ represents the radar cross section of the $q$-th target. The round-trip time delay ${\tau}_{q}=2{R}_{q}/c$ and Doppler shift ${\omega}_{q}={f}_{c}(2{V}_{q}/c)$ are determined by the distance ${R}_{q}$ and velocity ${V}_{q}$ of the $q$-th target, where $c$ denotes the speed of the light.
	${\vartheta}_{q}=\sin {\theta}_{q}\cos {\phi}_{q}$, ${\varphi}_{q}=\cos {\theta}_{q}$, where ${\theta}_{q}$ and ${\phi}_{q}$ denotes the elevation and azimuth angles. The receiver steering vector ${\bm{a}}_{\text{\text{re}}}({\vartheta}_{q},{\varphi}_{q})\in \mathbb{C}^{{N}_{\text{\text{re}}}\times 1}$ is given by
	\begin{align}
		\!\!\!\!	{\bm{a}}_{\text{\text{re}}}({\vartheta}_{q},{\varphi}_{q})\!=\!\big[{\text{e}^{-\text{j}\frac{2\uppi}{{\lambda}_{c}}(x_1 \vartheta_q+y_1 \varphi_q) }},\ldots,{\text{e}^{\!-\text{j}\frac{2\uppi}{{\lambda}_{c}}\!(\!x_{N_{\text{re}}} \vartheta_q+y_{N_{\text{re}}} \varphi_q\!)}} \big]^{\text{T}}\!,
	\end{align}
	where ${\lambda}_{c}$ denotes the carrier wavelength.
	The steering vector at {the} BS with URA configuration is ${\bm{a}}_{\text{bs}}({\vartheta}_{q},{\varphi}_{q})={\bm{a}}_{\text{bs},x}({\vartheta}_{q},{\varphi}_{q})\otimes {\bm{a}}_{\text{bs},y}({\vartheta}_{q},{\varphi}_{q})\in \mathbb{C}^{{N}_{\text{bs},x}{N}_{\text{bs},y}\times 1}$, where $x$ and $y$ direction steering vectors are expressed as
	\begin{align}
		{\bm{a}}_{\text{bs},x}({\vartheta}_{q},{\varphi}_{q})&=\big[ 1,\ldots,{\text{e}^{-\text{j}\frac{2\uppi}{{\lambda}_{c}}({N}_{\text{bs},x}-1)d {\vartheta}_{q} }} \big]^{\text{T}} ,\\
		{\bm{a}}_{\text{bs},y}({\vartheta}_{q},{\phi}_{q})&=\big[ 1,\ldots,{\text{e}^{-\text{j}\frac{2\uppi}{{\lambda}_{c}}({N}_{\text{bs},y}-1)d {\varphi}_{q}}} \big]^{\text{T}},
	\end{align}
	where $d=\lambda_c/2$ represents the spacing between adjacent antenna elements.
	It should be noted that in wireless communication, the reflection coefficients in the propagation environment correspond to the target reflection strength in wireless sensing. 
	The sensing channel matrix at the $k$-th subcarrier and $m$-th OFDM symbols can be expressed as 
	\begin{align}
		\!\!\!	\bm{G}_{m\!,k}\!=\!\sum_{q=1}^{Q}\!\beta_{q}\bm{a}_{\text{\text{re}}}(\vartheta_{q},\!\varphi_{q})\bm{a}_{\text{bs}}^{\text{T}}(\vartheta_{q},\!\varphi_{q}) {\text{e}^{-\text{j}2\uppi \tau_{q}{k}\Delta f{+{\text{j}2\uppi m\omega_{q}T_{\text{sym}}}}}}.\label{Gmk}
	\end{align}

Our goal is to achieve the optimized performances of estimating the parameters by flexibly adjusting $\mathcal{X},\mathcal{Y},\mathcal{T}$, and $\mathcal{F}$ within the limited resources. To achieve this goal, we propose the tensor decomposition-based algorithm and the flexible spatial-temporal-spectral optimization in the following.

	\section{Wireless Sensing for MIMO-OFDM ISAC-Based on Tensor Decomposition}
	
	In this section, we propose a tensor-based parameter estimation algorithm, which is suitable for {a variety of parameter estimation applications}. Specifically, we first model the received signal as a tensor and perform the CPD using the alternating least squares (ALS) algorithm. Then, we exploit the spectrum peak search method to estimation the parameters.
	\subsection{Tensor Formulation and Factor Matrices Estimation}\label{section II}
	For wireless sensing, the positions of MAs, the OFDM symbols and the subcarriers can be flexibly adjusted or selected. Specifically, the positions of MAs satisfy $\mathcal{X},\mathcal{Y}\in\mathcal{C}$, the number of distinct selections for $\mathcal{F}$ and $\mathcal{T}$ is $\binom{M_0}{M}$ and $\binom{K_0}{K}$, respectively.
	Considering $M$ OFDM symbols, \eqref{ymk} can be rewritten as\footnote{In this manuscript, we follow the common assumption in existing ISAC literature \cite{RuoyuZhangISACwithMIMO2024,RuoyuZhangChannelTraining2025,TangjunCooperativeISAC2025} that the self interference is perfectly cancelled.}
	\begin{align}
		\bm{Z}_{k}=\sum\limits_{q=1}^{Q}&\beta_{q}\bm{a}_{\text{\text{re}}}(\mathcal{X},\mathcal{Y},\vartheta_{q},\varphi_{q})\bm{a}_{\text{bs}}^{\text{T}}(\vartheta_{q},\varphi_{q})\bm{P\varGamma}(\mathcal{T},\omega_{q})\nonumber\\
		&\times{\text{e}^{-\text{j}2\uppi \tau_{q}\mathcal{F}(k)\Delta f}}+\bm{N}_{k},\label{Yk}
	\end{align}
	where $\bm{Z}_{k}=[\bm{z}_{1,k},\bm{z}_{2,k},\ldots,\bm{z}_{M,k}]\in \mathbb{C}^{N_{\text{\text{re}}}\times M}$ represents the received signal at the $k$-th subcarrier. The pilot training matrix $\bm{P}=[\bm{p}_{1},\ldots,\bm{p}_{M}]\in \mathbb{C}^{N_{\text{\text{bs}}}\times M}$, and $\bm{\varGamma}(\mathcal{T},\omega_{q})\in \mathbb{C}^{M\times M}=\diag[\bm{a}_{\text{do}}(\mathcal{T},\omega_{q})]$, where $\bm{a}_{\text{do}}(\mathcal{T},\omega_{q})$ represents the equivalent Doppler steering vector and is given by 
	\begin{align}
		\bm{a}_{\text{do}}(\mathcal{T},\omega_{q}) = \big[ \text{e}^{\text{j}2\uppi T_{\text{sym}}\mathcal{T}(1)\omega_{q}}, \ldots, \text{e}^{\text{j}2\uppi T_{\text{sym}}\mathcal{T}(M)\omega_{q}} \big]^{\text{T}}.
	\end{align}
	
	Simultaneously, the training signals received at UE are utilized for channel estimation. In the following, we model the received signal as a tensor and employ the ALS method to perform the CPD on the tensor, we then estimate the parameters from estimated factor matrices.

	Recalling the expression of the received signal and the channel model provided in \Cref{section II}, we {notice} that the AoAs, AoDs, Doppler shifts, and time delays caused by the targets independently affect the received signal in spatial-temporal-spectral dimensions.  Thanks to the characteristic of tensor decomposition that can decouple various parameters, we formulate the multi-dimensional parameter estimation via using a tensor decomposition model. Specifically,
	by combining the reflected signals from $K$ subcarriers, we reconstruct the signal as a third-order tensor $\bm{\mathcal{Z}}\in \mathbb{C}^{N_{\text{\text{re}}}\times M\times K}$ which is given by 
	\begin{align}
		\bm{\mathcal{Z}} =&\sum\limits_{q=1}^{Q} \bm{a}_{\text{\text{re}}}(\mathcal{X},\mathcal{Y},\vartheta_{q},\varphi_{q}) \circ \bm{b}_{\text{bs}}(\mathcal{T},\omega_{q}, \vartheta_{q},\varphi_{q}) \circ \bm{c}_{\text{td}}(\mathcal{F},\beta_{q},\tau_{q})\nonumber \\&+ \bm{\mathcal{N}},\label{Ytensor}
	\end{align}
	where $\bm{\mathcal{N}} \in \mathbb{C}^{N_{\text{\text{re}}} \times M \times K}$ denotes the noise tensor. $\bm{b}_{\text{bs}}(\mathcal{T},\omega_{q},\vartheta_{q},\varphi_{q}) = \bm{\varGamma}(\mathcal{T},\omega_{q})\bm{P}^{\text{T}}\bm{a}_{\text{bs}}(\vartheta_{q},\varphi_{q})$ represents the equivalent transmit steering vector. The equivalent delay steering vector is denoted as $\bm{c}_{\text{td}}(\mathcal{F},\beta_{q},\tau_{q}) = \beta_{q}\bm{a}_{\text{td}}(\mathcal{F},\tau_{q})$, where $\bm{a}_{\text{td}}(\mathcal{F},\tau_{q})$ is given by
	\begin{align}
		\bm{a}_{\text{td}}(\mathcal{F},\tau_{q}) = \big[ \text{e}^{-\text{j}2\uppi \tau_{q}\Delta f\mathcal{F}(1)}, \ldots, \text{e}^{-\text{j}2\uppi \tau_{q}\Delta f\mathcal{F}(K)} \big]^{\text{T}}.
	\end{align}
	Our objective is to estimate the unknown target parameters $\{ \vartheta_{q},\varphi_{q},v_{q},\tau_{q},\beta_{q} \}_{q=1}^{Q}$. This sensing problem can be formulated as \eqref{TensorY}.
		\begin{figure*}
			{\begin{align}
				\min_{\substack{
						\{\mathcal{X},\mathcal{Y}, \mathcal{T}, \mathcal{F}, \vartheta_{q}, \varphi_{q}, \omega_{q}
						\tau_{q}, \beta_{q} \}_{q=1}^{Q}
				}} 
				\bigg\| &\bm{\mathcal{Z}} - \sum_{q=1}^{Q} \bm{a}_{\text{\text{re}}}(\mathcal{X},\mathcal{Y}, \vartheta_{q}, \varphi_{q}) 
				\circ \bm{b}_{\text{bs}}(\mathcal{T},\omega_{q},\vartheta_{q}, \varphi_{q}) 
				\circ \bm{c}_{\text{td}}(\mathcal{F},\beta_{q},\tau_{q}) \bigg\|_{F}^{2},\label{TensorY}
		\end{align}}
	\vspace{-0.5em}
		\hrule
		\vspace{-0.75em}
	\end{figure*}
	
	{Following} \cite{RuoyuZhangISACwithMIMO2024}, the CPD of the received signal tensor in \eqref{Ytensor} can be given by
	\begin{align}
		\bm{\mathcal{Z}} =\llbracket\bm{C}^{(1)},\bm{C}^{(2)},\bm{C}^{(3)}\rrbracket+ \bm{\mathcal{N}}\label{YCP},
	\end{align}
	where the three factor matrices $\bm{C}^{(1)}\in  \mathbb{C}^{N_{\text{\text{re}}} \times Q}$, $\bm{C}^{(2)}\in \mathbb{C}^{M \times Q}$, and $\bm{C}^{(3)}\in  \mathbb{C}^{K \times Q}$ are respectively expressed as
	\begin{align}
		&\bm{C}^{(1)} \!= \! \big[ \bm{a}_{\text{\text{re}}}(\mathcal{X},\mathcal{Y},\vartheta_{1},\varphi_{1}), \ldots, \bm{a}_{\text{\text{re}}}(\mathcal{X},\mathcal{Y},\vartheta_{q},\varphi_{q}) \big]  \label{AA},\\
		&\bm{C}^{(2)} \! = \! \big[ \bm{b}_{\text{bs}}(\mathcal{T},\! \omega_{1}, \vartheta_{1},\! \varphi_{1}), \ldots, \bm{b}_{\text{bs}}( \mathcal{T},\! \omega_{q}, \vartheta_{q}, \varphi_{q}) \big] \label{BB} ,\\
		&\bm{C}^{(3)} \!= \! \big[ \bm{c}_{\text{td}}(\mathcal{F},\beta_{q},\tau_{q}), \ldots, \bm{c}_{\text{td}}(\mathcal{F},\beta_{q},\tau_{q}) \big] \label{CC}.
	\end{align}
	Various methods have been proposed to solve the CPD problem and acquire the factor matrices. One well-known approach is the ALS method, which updates one factor matrix while fixing the others until convergence \cite{Zhang2022Tensor}. For instance, the CPD of tensor $\bm{\mathcal{Z}}$ can be given as
	\begin{align}
		\bm{C}_{(\iota+1)}^{(1)} &= \arg \underset{\bm{C}_{(\iota)}^{(1)}}{\min} \big\| \bm{Z}_{(1)} - \bm{C}^{(1)} \big( \bm{C}_{(\iota)}^{(3)} \odot \bm{C}_{(\iota)}^{(2)} \big)^{\text{T}} \big\|_{F}^{2} \label{C1_k+1},\\
		\bm{C}_{(\iota+1)}^{(2)} &= \arg \underset{\bm{C}_{(\iota)}^{(2)}}{\min} \big\| \bm{Z}_{(2)} - \bm{C}^{(2)} \big( \bm{C}_{(\iota)}^{(3)} \odot \bm{C}_{(\iota+1)}^{(1)} \big)^{\text{T}} \big\|_{F}^{2}\label{C2_k+1}, \\
		\bm{C}_{(\iota+1)}^{(3)} &= \arg \underset{\bm{C}_{(\iota)}^{(3)}}{\min} \big\| \bm{Z}_{(3)} - \bm{C}^{(3)} \big( \bm{C}_{(\iota+1)}^{(1)} \odot \bm{C}_{(\iota+1)}^{(2)} \big)^{\text{T}} \big\|_{F}^{2}\label{C3_k+1},
	\end{align}
	where $\bm{Z}_{(1)}$, $\bm{Z}_{(2)}$, and $\bm{Z}_{(3)}$ represent the mode-1, mode-2, and mode-3 unfoldings of $\bm{\mathcal{Z}}$ respectively. Through iterative execution of these three update steps, we obtain the estimated factor matrices $\big\{ \hat{\bm{C}}^{(1)},\hat{\bm{C}}^{(2)},\hat{\bm{C}}^{(3)} \big\}$ upon convergence. It should be noted that the factor matrices obtained via CPD typically exhibit biases and are subject to two types of ambiguities  scaling ambiguity and permutation ambiguity \cite{kolda2009tensor}. For the estimated factor matrices $\big\{ \hat{\bm{C}}^{(1)},\hat{\bm{C}}^{(2)},\hat{\bm{C}}^{(3)} \big\}$, they satisfy the following relationships 
	\begin{align}
		\hat{\bm{C}}^{(1)} &\triangleq \bm{C}^{(1)}\bm{{\varPi}}\bm{{\varDelta}}^{(1)} + \bm{E}^{(1)} \label{AAhat},\\
		\hat{\bm{C}}^{(2)} &\triangleq \bm{C}^{(2)}\bm{{\varPi}}\bm{{\varDelta}}^{(2)} + \bm{E}^{(2)} \label{BBhat},\\
		\hat{\bm{C}}^{(3)} &\triangleq \bm{C}^{(3)}\bm{{\varPi}}\bm{{\varDelta}}^{(3)} + \bm{E}^{(3)}\label{CChat},
	\end{align}
	where $\bm{\varPi} \in \mathbb{C}^{Q \times Q}$ is a permutation matrix, and $\bm{\varDelta}^{(1)}$, $\bm{\varDelta}^{(2)}$, $\bm{\varDelta}^{(3)}$ are diagonal matrices. The matrices $\bm{E}^{(1)}$, $\bm{E}^{(2)}$, and $\bm{E}^{(3)}$ represent the estimation error terms.
	From \eqref{AAhat}-\eqref{CChat}, we observe that the estimated factor matrices differ from the original ones, even in noise-free scenarios. Fortunately, these ambiguities do not disrupt the pairing relationships between the estimated and true factor matrices, enabling the extraction of AoA information from corresponding column vectors.
	In the next subsection, we will propose the spectrum peak search algorithm to estimate the unknown parameters.

	\subsection{Parameter Estimation} \label{section IV}
	In Section \ref{section II}, we employed the ALS algorithm to perform CPD of the tensor and estimate the factor matrices. In this subsection, we proceed to estimate the elevation and azimuth angles, time delays, Doppler shifts, and reflection coefficients of targets from these estimated factor matrices. 
	
	In \eqref{AA}, we observe that each column of the factor matrix $\hat{\bm{C}}^{(1)}$ corresponds to a pair of elevation and azimuth angles. It should be noted that the permutation matrix $\bm{\varPi}$ remains consistent across all factor matrices, thus its presence does not affect parameter estimation. Therefore, we can estimate the $\vartheta$ and $\varphi$ associated with each column of $\hat{\bm{C}}^{(1)}$ using a correlation-based method 
	\begin{align}\label{hatthetaphi}
		\!\!\big\{ \hat{\vartheta}_q, \hat{\varphi}_q \big\} =&\argmin_{\vartheta_q,\phi_q}\|\hat{\bm{c}}_q^{(1)}- \bm{a}_{\text{\text{re}}}(\mathcal{X},\mathcal{Y},\vartheta_q,\varphi_q)\|_2^2\\
		=&\argmax_{\vartheta_q,\varphi_q} \frac{\big| (\hat{\bm{c}}_q^{(1)})^{\text{H}} \bm{a}_{\text{\text{re}}}(\mathcal{X},\mathcal{Y},\vartheta_q,\varphi_q) \big|^2}{\|\hat{\bm{c}}_q^{(1)}\|_2^2 \|\bm{a}_{\text{\text{re}}}(\mathcal{X},\mathcal{Y},\vartheta_q,\varphi_q)\|^2_2},\nonumber
	\end{align}
	where $\hat{\bm{c}}_q^{(1)}$ denotes the $q$-th column of the estimated factor matrix $\hat{\bm{C}}^{(1)}$. Then the elevation and azimuth angles can be estimated by $\hat\theta=\arccos\hat{\phi}$ and $\hat\phi=\arccos {\hat\vartheta}/ {\sqrt{1-\hat{\varphi}^2}}$.
	
	The $q$-th column of \eqref{CChat} satisfies 
	\begin{align}
		\hat{\bm{c}}_{q}^{(2)} &= \delta_{q}^{(2)}\bm{\varGamma}(\mathcal{T},\omega_{q})\bm{P}^{\text{T}}\bm{a}_{\text{bs}}(\mathcal{X},\mathcal{Y},\theta_{q},\phi_{q}) + \bm{e}_{q}^{(2)} \nonumber
		\\ &= \delta_{q}^{(2)}\bm{\hat{Q}}^{\text{T}}\bm{a}_{\text{do}}(\mathcal{T},\omega_q) + \bm{e}_{q}^{(2)},\label{cchatchange}
	\end{align}
	where the second equality holds due to the definition of diagonal matrix $\bm{\hat Q} \in \mathbb{C}^{M \times M}$ with diagonal elements $\bm{P}^{\text{T}}\bm{a}_{\text{bs}}(\hat{\theta}_{q},\hat{\phi}_{q})$, and $\delta_{q}^{(2)}$ denotes the $(q,q)$-th element of $\bm{\varDelta}^{(2)}$. Since the elevation and azimuth angles of AoA/AoD are equal, the AoA can be estimated from \eqref{hatthetaphi}, and with the known precoding matrix $\bm{P}$, \eqref{cchatchange} can be rewritten as
	\begin{align}\label{tildebbhat}
		\tilde{\hat{\bm{c}}}_{q}^{(2)} = \delta_{q}^{(2)}\bm{a}_{\text{do}}(\mathcal{T},\omega_q)  + \tilde{\bm{e}}_{q}^{(2)},
	\end{align}
	where $\tilde{\hat{\bm{c}}}_{q}^{(2)} = (\bm{\hat{Q}}^{\text{T}})^{-1}\hat{\bm{c}}_{q}^{(2)}$, $\tilde{\bm{e}}_{q}^{(2)} = (\bm{\hat{Q}}^{\text{T}})^{-1}\bm{e}_{q}^{(2)}$, combining all $Q$ targets, we obtain the matrix form of \eqref{tildebbhat} 
	\begin{align}\label{tildeBBhat}
		\tilde{\hat{\bm{C}}}^{(2)} = \bm{A}_{\text{do}}\bm{\varDelta}^{(2)} + \tilde{\bm{E}}^{(2)},
	\end{align}
	where $\bm{A}_{\text{do}}=[\bm{a}_{\text{do}}(\mathcal{T},\omega_1),\ldots,\bm{a}_{\text{do}}(\mathcal{T},\omega_Q)]$.
	We now turn to estimating the Doppler shifts from the factor matrix $\tilde{\hat{\bm{C}}}^{(2)}$. Substituting the estimated $\big\{ \hat{\vartheta}_q, \hat{\varphi}_q \big\}$ into \eqref{tildebbhat},\footnote{For multistatic sensing systems where both AoD and Doppler shift are unknown in $\hat{\bm{C}}^{(2)}$, the alternating iterative algorithm in \cite{RuoyuZhangChannelTraining2025} can be employed to estimate AoD and Doppler shift simultaneously.} Similarly, we can estimate $\omega_q$ by computing
	\begin{align}\label{hatDop}
		\hat{\omega}_q 
		&= \argmax_{\omega_q} \frac{\big| (\tilde{\hat{\bm{c}}}_q^{(2)})^{\text{H}} \bm{a}_{\text{do}}(\mathcal{T},\omega_q) \big|^2}{\|\tilde{\hat{\bm{c}}}_q^{(2)}\|_2^2 \|\bm{a}_{\text{do}}(\mathcal{T},\omega_q)\|_2^2}.
	\end{align}
	Also, based on \eqref{CChat}, we can proceed to estimate $\tau_q$ via
	\begin{align}
		\hat{\tau}_q 
		&= \argmax_{\tau_q} \frac{\big| (\hat{\bm{c}}_q^{(3)})^{\text{H}} \bm{a}_{\text{td}}(\mathcal{F},\tau_q) \big|^2}{\|\hat{\bm{c}}_q^{(3)}\|_2^2 \|\bm{a}_{\text{td}}(\mathcal{F},\tau_q)\|_2^2},\label{hattau}
	\end{align}
	where $\hat{\bm{c}}_q^{(3)}$ represents the $q$-th column of $\hat{\bm{C}}^{(3)}$.

	Following this estimation procedure, after obtaining $\{ \hat{\theta}_q, \hat{\phi}_q, \hat{\omega}_q, \hat{\tau}_q \}_{q=1}^Q$, we proceed to estimate the reflection coefficients $\{ \beta_q \}_{q=1}^Q$. Based on the CPD in \eqref{CC}, we unfold $\bm{\mathcal{Z}}$ along mode-3, transpose it, and vectorize the columns to obtain 
	\begin{align}
		\text{vec} (\bm{Z}_{(3)}^{\text{T}} ) = \bm{A}_{\text{td}} \odot (\bm{C}^{(2)} \odot \bm{C}^{(1)}) \bm{\beta} + \text{vec}( \bm{N}_{(3)}^{\text{T}} ),
	\end{align}
	where $\bm{A}_{\text{td}} = \big[ \bm{a}_{\text{td}}(\mathcal{F},\tau_1), \ldots, \bm{a}_{\text{td}}(\mathcal{F},\tau_Q) \big]$ represents the delay steering matrix. Therefore, we estimate $\hat{\bm{\beta}}$ as
	\begin{align}\label{hatbeta}
		\hat{\bm{\beta}} = [ \hat{\bm{A}}_{\text{td}} \odot (\bar{\bm{C}}^{(2)} \odot \bar{\bm{C}}^{(1)}) ]^\dagger \text{vec}( \bm{Z}_{(3)}^{\text{T}} ),
	\end{align}
	where $\bar{\bm{C}}^{(1)}=[\bm{b}_{\text{bs}}(\mathcal{X},\mathcal{Y},\hat{\vartheta}_1,\hat{\varphi}_1),\ldots,\bm{b}_{\text{bs}}(\mathcal{X},\mathcal{Y},\hat{\vartheta}_Q,\hat{\varphi}_Q)]$,  $\bar{\bm{C}}^{(2)}=[\bm{b}_{\text{bs}}(\mathcal{T},\hat{\omega}_1,\hat{\vartheta}_1,\hat{\varphi}_1),\ldots,\bm{b}_{\text{bs}}(\mathcal{T},\hat{\omega}_Q,\hat{\vartheta}_Q,\hat{\varphi}_Q)]$, and $\hat{\bm{A}}_{\text{td}} = \big[ \bm{a}_{\text{td}}(\mathcal{F},\hat{\tau}_1), \ldots, \bm{a}_{\text{td}}(\mathcal{F},\hat{\tau}_Q) \big]$.
	It is important to note that the estimation of reflection coefficients depends on $\{ \hat{\vartheta}_q, \hat{\varphi}_q, \hat{\omega}_q, \hat{\tau}_q \big\}_{q=1}^Q$. Since the optimization in \eqref{maxK}-\eqref{maxM} improves the CRB for $\vartheta$, $\varphi$, $\omega$, and $\tau$, the estimation performance of $\hat{\bm{\beta}}$ is consequently enhanced.

	Benefiting from the ability of CPD to decouple the parameters from multidimensional signals into multiple independent groups, this approach reduces the dimensionality of the Fisher information matrix, thereby enabling the derivation of specific CRB expressions for individual parameters and laying the foundation for flexible spatial-temporal-spectral optimization. In the next subsection, we will present the derivation of CRB for each parameter individually.
	
	Next, we analyze the complexity of the proposed algorithm. 
	For a received signal tensor of dimensions $N_{\text{re}} \times K \times M$ and rank $Q$, the complexity per iteration for updating the factor matrices is approximately $\mathcal{O}(N_{\text{re}} K M Q)$. Given that the subsequent parameter extraction involves only simple phase operations with a negligible complexity of $\mathcal{O}(Q N_{\text{re}} + Q K + Q M)$, the total complexity of the algorithm after $I_{\text{iter}}$ iterations is effectively$ \mathcal{O}(I_{\text{iter}} N_{\text{re}} K M Q)$, where $N_{\text{re}}$, $K$, and $M$ denote the number of receive antennas, subcarriers, and OFDM symbols.
	
	\section{Flexible
		Spatial-Temporal-Spectral Optimization for Wireless Sensing}\label{section IIV}
	
	In this section, we first derive the CRB of each parameter based on the estimated factor matrices. Then, we transform the problem of minimizing the CRB into the flexible spatial-temporal-spectral optimization with respect to $\mathcal{X},\mathcal{Y},\mathcal{T}$, and $\mathcal{F}$.
	\subsection{Derivation of CRB Based on Estimated Factor Matrices}

	In this subsection, {we first} illustrate the difficulty of deriving the specific expression of CRB for each parameter from the received signal matrix. {Then}, we derive the CRB of each parameter individually based on the expression of the estimated factor matrices.
	
	Our goal is to improve the performance of estimation, which can be transformed into minimizing the CRB of each parameter.
	As is well known, the normalized root mean square error of parameter estimation represents the performance of the parameter estimation. For convenience of notation, let $\bm{\mu} = [ \bm{\beta}^{\text{T}}, \bm{\vartheta}^{\text{T}}, \bm{\varphi}^{\text{T}}, \bm{\omega}^{\text{T}}, \bm{\tau}^{\text{T}} ] \in \mathbb{C}^{5Q \times 1}$, where $\bm{\beta} = {[ \beta_{1}, \ldots, \beta_{Q} ]}^{\text{T}}$, $\bm{\vartheta} = {[ \vartheta_{1}, \ldots, \vartheta_{Q} ]}^{\text{T}}$, $\bm{\varphi} = {[ \varphi_{1}, \ldots, \varphi_{Q} ]}^{\text{T}}$, $\bm{\omega} = {[ \omega_{1}, \ldots, \omega_{Q} ]}^{\text{T}}$, and $\bm{\tau} = {[ \tau_{1}, \ldots, \tau_{Q} ]}^{\text{T}}$. The RMSE of the $q$-th parameter ${[\bm{\eta}]}_{q}$ and the CRB of this parameter satisfy the following relationship 
{	\begin{align}
		\text{RMSE}({[{\bm{\eta }}]_q}) \ge \sqrt {CRB({{[{\bm{\eta }}]}_q})} .
	\end{align}}
	The typical CRB is derived based on \eqref{C1_k+1}-\eqref{C2_k+1} in \cite{RuoyuZhangChannelTraining2025}, which can be given by
	\begin{align}
		\bm{f}(\bm\mu) & = \text{const} -  \big\| \bm{Z}_{(1)}^{\text{T}} - \big( \bm{C}^{(3)} \odot \bm{C}^{(2)} \big) \bm{C}^{(1){\text{T}}} \big\|_{F}^{2}/{\sigma^{2}_{n}}\\
		& = \text{const} - \big\| \bm{Z}_{(2)}^{\text{T}} - \big( \bm{C}^{(3)} \odot \bm{C}^{(1)} \big) \bm{C}^{(2){\text{T}}} \big\|_{F}^{2}/{\sigma^{2}_{n}}\\
		& = \text{const} -  \big\| \bm{Z}_{(3)}^{\text{T}} - \big( \bm{C}^{(2)} \odot \bm{C}^{(1)} \big) \bm{C}^{(3){\text{T}}} \big\|_{F}^{2}/{\sigma^{2}_{n}},
	\end{align}
	where $\text{const} = -N_{\text{\text{re}}} N_{\text{bs}} K \ln \big( \uppi \sigma_{n}^{2} \big)$. The resulting Fisher information matrix (FIM) is $\bm{\varOmega}(\bm\eta) \in \mathbb{C}^{5Q \times 5Q}$, and the CRB for the unknown parameters $\eta$ can be obtained via $\text{CRB}(\bm\eta) = \bm{\varOmega}^{-1}(\bm\eta)$. However, obtaining an analytical expression for the CRB is exceedingly challenging because inverting this unstructured $5Q \times 5Q$ block matrix lacks a closed-form solution, which in turn poses a significant obstacle to optimizing the CRB. 
	
	Fortunately, by employing the tensor CPD method, \eqref{AAhat}, \eqref{CChat}, and \eqref{tildeBBhat} become mutually independent. Furthermore, it should be noted that the expressions in \eqref{AAhat}, \eqref{CChat}, and \eqref{tildeBBhat} enable the derivation of the CRB because they explicitly characterize the statistical relationship between the estimated parameters and the true values under a defined perturbation model. This formulation allows the construction of a likelihood function and computation of the Fisher information matrix for all parameters. This allows the unknown parameters to be decoupled into three distinct groups  $\{ \vartheta_{q}, \varphi_{q} \}_{q=1}^{Q}$, $\{ \omega_{q}, \vartheta_{q}, \varphi_{q} \}_{q=1}^{Q}$, and $\{ \beta_{q}, \tau_{q} \}_{q=1}^{Q}$. 
	Based on \eqref{AAhat}-\eqref{tildeBBhat}, we can independently compute the CRBs of the parameters. 
	As proved in Appendix \ref{Appendix A}, the CRBs are given by
	\begin{align}
		\bm{CRB}_{\bm{\vartheta} \bm{\vartheta}} & = \big[{{\bm{F}}_{\bm{\vartheta} \bm{\vartheta}}} - {{\bm{F}}_{\bm{\vartheta} \bm{\varphi}}} \bm{F}_{\bm{\varphi} \bm{\varphi}}^{-1} \bm{F}_{\bm{\vartheta} \bm{\varphi}}^{\text{H}}\big]^{-1}
		\label{CRBUU}, \\
		\bm{CRB}_{\bm{\varphi} \bm{\varphi}}   & = \big[\bm{F}_{\bm{\varphi} \bm{\varphi}} - \bm{F}_{\bm{\vartheta} \bm{\varphi}}^{\text{H}} \bm{F}_{\bm{\vartheta} \bm{\vartheta}}^{-1} {{\bm{F}}_{\bm{\vartheta} \bm{\varphi}}}\big]^{-1}\label{CRBVV}, \\
		\bm{CRB}_{\bm{\omega} \bm{\omega}}&= \big[{{\bm{F}}_{\bm{\omega} \bm{\omega}}} - {{\bm{F}}_{\bm{\omega} {{\bm{\delta }}^{(2)}}}} {{\bm{F}}_{{{\bm{\delta }}^{(2)}}{{\bm{\delta }}^{(2)}}}} {{\bm{F}}^{\text{H}}_{\bm{\omega} {{\bm{\delta }}^{(2)}}}}\big]^{-1}\label{CRBTT},\\
		\bm{CRB}_{\bm{\tau} \bm{\tau}} &= \big[{{\bm{F}}_{\tau \tau}} - {{\bm{F}}_{\bm{\tau} \bm{\beta}}} \bm{F}_{\bm{\beta} \bm{\beta}}^{-1} {{\bm{F}}^{\text{H}}_{\bm{\tau} \bm{\beta}}}\big]^{-1}\label{CRBNN}.
	\end{align}
	
	A key strength of the CPD is its capability to decompose a high-order tensor into a sum of rank-one tensors, thus inherently isolating individual targets. Accordingly, we derive the CRB separately for each of the $Q$ targets. Without loss of generality and for notational clarity, we omit the subscript $q$ in all subsequent variables. The expressions of CRBs in \eqref{CRBUU}-\eqref{CRBNN} can be expressed as
	\begin{align}
		&CRB_{\vartheta\vartheta} = \Big[ \frac{8\uppi^2|\delta^{(1)}|^2}{\lambda_c^2\sigma_{\text{e}^{(1)}}^2} \big( \var(\mathcal{Y}) - \frac{\cov(\mathcal{X},\mathcal{Y})^2}{\var(\mathcal{X})} \big) \Big]^{-1}\label{CRBuu},
		\\
		&CRB_{\varphi\varphi} = \Big[ \frac{8\uppi^2|\delta^{(1)}|^2}{\lambda_c^2 \sigma_{\text{e}^{(1)}}^2} \big( \var(\mathcal{X}) - \frac{\cov(\mathcal{X},\mathcal{Y})^2}{\var(\mathcal{Y})} \big) \Big]^{-1}\label{CRBvv},
		\\
		&{CRB}_{\omega \omega} = \Big[\frac{8\uppi^2 T_{\text{sym}}^2 |\delta^{(2)}|^2}{\sigma_{\tilde{e}^{(2)}}^2} \var(\mathcal{T})\Big]^{-1},\label{CRBnunu}
		\\
		&{CRB}_{\tau \tau} =  \Big[ \frac{8\uppi^2 |\beta|^2  f_{\text{s}}^2| \delta^{(3)} |^2 K}{K_0^2\sigma_{\text{e}^{(3)}}^2}\var{(\mathcal{F})} \Big] ^{- 1}\label{CRBtautau}.
	\end{align}
	The parameters $\delta^{(1)}$, $\delta^{(2)}$, and $\delta^{(3)}$ are derived from $\bm{\varDelta}^{(1)}$, $\bm{\varDelta}^{(2)}$, and $\bm{\varDelta}^{(3)}$ respectively, while $\sigma_{\text{e}^{(1)}}^2$, $\sigma_{\tilde{e}^{(2)}}^2$, and $\sigma_{\text{e}^{(3)}}^2$ represent the power of error terms.\footnote{It is worth noting that the constant scaling factors in these CRB expressions are strictly independent of the structural design variables. Consequently, their exact numerical values are not required to determine the optimal resource distribution and will not introduce any additional computational complexity in practice.} The detailed derivations of \eqref{CRBuu}-\eqref{CRBtautau} are provided in Appendix \ref{Appendix B}.
	In the next section, we will present the flexible spatial-temporal-spectral optimization for $\mathcal{X},\mathcal{Y}$, $\mathcal{F}$, and $\mathcal{T}$, respectively. 
	In the following subsections, we present and prove the optimized MA positions, subcarrier distribution, and OFDM symbols distribution for wireless sensing.


	\subsection{Flexible Spatial Optimization}\label{subsectionIIIA}
	In this subsection, we focus on the flexible spatial optimization in \eqref{uu} and \eqref{vv} for positions of MAs. Since the positions of receive antennas can be flexibly adjusted and other coefficients are fixed, we transform the CRB minimization problem into optimizing the expression respect to $\mathcal{X}$ and $\mathcal{Y}$. 
	\begin{align}
		\min CRB_{\vartheta\vartheta} &\Leftrightarrow \max_{\mathcal{X},{\mathcal{Y}}} \var(\mathcal{Y}) - \frac{\cov(\mathcal{X},\mathcal{Y})^2}{\var(\mathcal{X})}\label{uu},\\
		\min CRB_{\varphi\varphi} &\Leftrightarrow \max_{\mathcal{X},{\mathcal{Y}}} \var(\mathcal{X}) - \frac{\cov(\mathcal{X},\mathcal{Y})^2}{\var(\mathcal{Y})}\label{vv}.
	\end{align}
	Since the variables in \eqref{uu} and \eqref{vv} are identical and the objective functions exhibit rotational symmetry with respect to $\mathcal{X}$ and $\mathcal{Y}$, we jointly optimize \eqref{uu} and \eqref{vv}.
	
	We consider MAs located within a fixed region $\mathcal{C}$. To prevent severe mutual coupling between antennas, we enforce a minimum separation distance of $d$ between any two antennas. Thus, \eqref{uu} and \eqref{vv} can be expressed as
\begin{subequations}\label{maxuuvv}
		\begin{align}
		\max_{\mathcal{X},{\mathcal{Y}}} \enspace & \var(\mathcal{Y}) - \frac{\cov(\mathcal{X},\mathcal{Y})^2}{\var(\mathcal{X})}+ \var(\mathcal{X}) - \frac{\cov(\mathcal{X},\mathcal{Y})^2}{\var(\mathcal{Y})} \\
		\text{s.t.} \enspace
			&(\mathcal{X},\mathcal{Y}) \in \mathcal{C}, \\
			&\sqrt{(x_{n_1}-x_{n_2})^2 + (y_{n_1}-y_{n_2})^2} \geq d. \label{47c}
	\end{align}
\end{subequations}
		Solving \eqref{maxuuvv} presents significant challenges. On one hand, \eqref{47c} is non convex potentially leading to local optima. On the other hand irregular boundaries of the movable region may preclude analytical solutions. Fortunately,\cite{maMovableAntennaEnhanced2024} has derived analytical solutions for optimal antenna arrays in circular regions and developed computational methods for general regions. Specifically this highly non convex problem is efficiently solved by alternating the optimization variables between the $x$ and $y$ directions. In each alternating step the successive convex approximation method is utilized to construct a convex surrogate function for the non convex minimum distance constraint iteratively guiding the antenna coordinates to a stationary point. We will compute the optimal MA positions following this alternating approach in \cite{maMovableAntennaEnhanced2024}.

\subsection{Flexible Temporal Optimization}\label{subsectionIIIB}
In this subsection, we focus on the flexible temporal optimization for the OFDM symbol distribution. Similar to the process in Section \ref{subsectionIIIA}, we transform the CRB minimization into optimizing the expression with respect to $\mathcal{T}$.
\begin{align}
	\min \enspace CRB_{\omega\omega} \enspace\Leftrightarrow\enspace \max_{\mathcal{T}} \enspace\var(\mathcal{T}) \label{nunu}.
\end{align}
Different from the elements in $\mathcal{X}$ and $\mathcal{Y}$, the set $\mathcal{T}$ must be a set of positive integers. Then the optimization problem in \eqref{nunu} can be formulated as
\begin{subequations}\label{maxM}
	\begin{align}
	 \max_{\mathcal{T}} \enspace &\var(\mathcal{T}) \\
	\text{s.t.}  \enspace & \mathcal{T}(m_1) \ne \mathcal{T}(m_2), \quad m_1, m_2 = 1, \ldots, M \\
		& 1 \le \mathcal{T}(m) \le M_0, \quad m = 1, \ldots, M \\
		& \mathcal{T}(m) \in \mathbb{Z}^{+},\quad m = 1, \ldots, M.
\end{align}
\end{subequations}
Although the objective function of the aforementioned optimization problem is convex, the constraints that all variables must be distinct from each other and take integer values make this problem particularly challenging to solve. Fortunately, Theorem \ref{Thoerem1} provides a closed-form solution to address a class of optimization problems sharing a similar form. Let $\mathcal{T}^{\star}$ denote the optimal solution to \eqref{maxM}.

\begin{theorem}\label{Thoerem1}
	The optimal OFDM symbol distribution that minimizes the root mean square error of Doppler estimation is given by
	\begin{align}
		\mathcal{T}(m)^{\star} = \begin{cases}
			1,2,\ldots,\bar{M} & m \leq \bar{M}, \\
			M_0 - \bar{M} + 1,\ldots,M_0 & m > \bar{M}.
		\end{cases}
	\end{align}
	where $\bar{M} = \lfloor M/2 \rfloor$.
\end{theorem}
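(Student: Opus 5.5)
The plan is an exchange (local-improvement) argument that pins down the shape of any maximizer of \eqref{maxM}, followed by a one-parameter optimization over that shape. Write $\mathcal{T}=\{t_1,\ldots,t_M\}$, $S=\sum_m t_m$ and $\mu=S/M$, so that $M\var(\mathcal{T})=\sum_m t_m^2-S^2/M$. Since the feasible set is finite, a maximizer exists, and I will study one.

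\emph{Step 1 (single-element moves).} Suppose some element $t\le\mu$ has $t-1\ge 1$ and $t-1\notin\mathcal{T}$. Replacing $t$ by $t-1$ keeps the set feasible, and the objective changes by
\begin{align}
\Delta_-=(t-1)^2-t^2-\frac{(S-1)^2-S^2}{M}=2(\mu-t)+1-\frac{1}{M}>0 .
\end{align}
Symmetrically, suppose $t\ge\mu$ has $t+1\le M_0$ and $t+1\notin\mathcal{T}$. Replacing $t$ by $t+1$ changes the objective by
\begin{align}
\Delta_+=2(t-\mu)+1-\frac{1}{M}>0 .
\end{align}
(The case $M=1$ is trivial, since the variance is then zero.) Hence at a maximizer:
\begin{itemize}
\item every element $\le\mu$ has its predecessor in $\mathcal{T}$ or equals $1$;
\item every element $\ge\mu$ has its successor in $\mathcal{T}$ or equals $M_0$.
\end{itemize}
Consequently, the elements below $\mu$ form a block $\{1,\ldots,a\}$ and those above form a block $\{M_0-b+1,\ldots,M_0\}$, with $a+b=M$. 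An element equal to $\mu$ is attached to either block by the same rule. This step contains the main idea; the only care needed is to handle the boundary constraints $1$ and $M_0$ correctly.

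\emph{Step 2 (optimize the split $a$).} For the two-block configuration, decompose the variance into within-block and between-block parts. A block of $n$ consecutive integers has variance $(n^2-1)/12$. The block means are $(a+1)/2$ and $M_0-(b-1)/2$, so their gap is
\begin{align}
D=M_0-\frac{M}{2},
\end{align}
which does not depend on $a$. The total variance is the weighted within-block variance plus $\frac{ab}{M^2}D^2$, which gives
\begin{align}
M\var(\mathcal{T})=\frac{a^3+b^3-M}{12}+\frac{ab}{M}D^2 .
\end{align}
Using $a^3+b^3=M^3-3abM$, this becomes
\begin{align}
M\var(\mathcal{T})=\frac{M^3-M}{12}+ab\Big(\frac{D^2}{M}-\frac{M}{4}\Big).
\end{align}
Because $M_0\ge M$, we have $D\ge M/2$, so the bracket is nonnegative. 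The objective is therefore maximized by maximizing $ab$ subject to $a+b=M$, that is, by $a=\lfloor M/2\rfloor=\bar M$ (or $a=\lceil M/2\rceil$, which is the mirror-image solution).

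\emph{Step 3 (conclusion).} With $a=\bar M$ this is exactly $\mathcal{T}^{\star}$ in Theorem~\ref{Thoerem1}. By \eqref{nunu}, maximizing $\var(\mathcal{T})$ minimizes $CRB_{\omega\omega}$ in \eqref{CRBnunu}, which establishes the claim. In the degenerate case $M_0=M$ the bracket vanishes, and every choice, including $\mathcal{T}^{\star}$, is optimal.
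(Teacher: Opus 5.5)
Your proof is correct, and it takes a genuinely different route from the paper's. The paper (Appendix~C) argues greedily. It uses the update $\mathrm{var}^{m+1}=\frac{m}{m+1}\mathrm{var}^{m}+\frac{m}{(m+1)^2}\big(\mathcal{T}(m+1)-\frac{1}{m}\sum_{l=1}^{m}\mathcal{T}(l)\big)^2$ and places each new symbol at the free slot farthest from the running mean. It notes that this alternates between the two ends of the frame, and then asserts that maximizing every increment yields the global maximum. That final assertion is not justified there. The term $\frac{m}{m+1}\mathrm{var}^m$ is constant only once the first $m$ placements are fixed, so stepwise optimality alone does not rule out a different prefix giving a larger final variance.

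You instead derive necessary conditions that every maximizer must satisfy: the strictly improving unit moves $\Delta_\pm=2|t-\mu|+1-\frac{1}{M}>0$ force the two-block shape. You then solve the remaining one-parameter family in closed form via $M\var(\mathcal{T})=\frac{M^3-M}{12}+ab\big(\frac{D^2}{M}-\frac{M}{4}\big)$, where the gap $D=M_0-\frac{M}{2}$ does not depend on the split. This gives you:
\begin{itemize}
\item an actual global-optimality proof;
\item a description of all maximizers when $M<M_0$ (unique for even $M$, a mirror pair for odd $M$);
\item an explicit optimal value, which makes the paper's remark that $CRB_{\omega\omega}$ decreases in $M_0$ immediate;
\item consistency with the constraint $\mathcal{T}(m)\ge 1$, whereas the paper's proof starts from $\mathcal{T}(1)=0$.
\end{itemize}
The paper's construction mainly offers intuition for why the symbols go to the two ends.

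Two small remarks. First, an element equal to $\mu$ actually forces $\{1,\ldots,M_0\}\subseteq\mathcal{T}$, i.e.\ $M=M_0$. So for $M<M_0$ no element equals the mean and the two blocks automatically partition $\mathcal{T}$; you can state this instead of ``attached to either block.'' Second, for odd $M$ the theorem's displayed upper block $\{M_0-\bar{M}+1,\ldots,M_0\}$ has only $\bar{M}$ elements, so the statement is literally one symbol short. Your Step~3 identification with $\mathcal{T}^{\star}$ holds only if the upper block is read as $\{M_0-(M-\bar{M})+1,\ldots,M_0\}$. That is what your derivation actually yields, and you should say so explicitly.
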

\begin{proof}
	See Appendix \ref{Appendix C}.
\end{proof}

Theorem \ref{Thoerem1} demonstrates that to minimize the CRB for Doppler estimation, the training OFDM symbols should be divided into two groups. Specifically, half of the training OFDM symbols should occupy the lower end of the total OFDM symbol range, while the other half should occupy the upper end. It is straightforward to understand that the distribution proposed in Theorem \ref{Thoerem1} maximizes the variance by spreading the OFDM symbols as widely as possible.

Furthermore, given the number of OFDM symbols $M$, increasing the total number of OFDM symbols $M_0$ can effectively reduce $CRB_{\omega\omega}$.

\subsection{Flexible Spectral Optimization}\label{subsectionIIIC}

In this section, we focus on the flexible spectral optimization of the subcarrier distribution. Similar to the reasoning in Section \ref{subsectionIIIB}, we transform the CRB minimization into optimizing the expression with respect to $\mathcal{F}$.
\begin{align}
	\min \enspace CRB_{\tau\tau} \enspace\Leftrightarrow\enspace \max_{\mathcal{F}} \enspace\var(\mathcal{F}) \label{tautau}.
\end{align}
Following the rationale established in Section \ref{subsectionIIIB}, the optimization problem in \eqref{tautau} can be formulated as
\begin{subequations}\label{maxK}
	\begin{align}
	 \max_{\mathcal{F}} \enspace &\var(\mathcal{F})\\
	\text{s.t.} \enspace
		& \mathcal{F}(k_1) \ne \mathcal{F}(k_2), \quad k_1, k_2 = 1, \ldots, K, \\
		& 1 \le \mathcal{F}(k) \le K_0, \quad k = 1, \ldots, K, \\
		& \mathcal{F}(k) \in \mathbb{Z}^{+},\quad k = 1, \ldots, K.
\end{align}
\end{subequations}
Let $\mathcal{F}^{\star}$ denote the optimal solution to \eqref{maxK}. It is noteworthy that \eqref{maxK} is similar to the form of \eqref{maxM}. Consequently, we can likewise employ Theorem \ref{Thoerem1} to obtain the optimal solution $\mathcal{F}^{\star}$.

	To this end, we have shown the proposed flexible spatial-temporal-spectral optimization-based tensor algorithm, which is summarized in Algorithm \ref{Algorithm 1}. Notably, the proposed estimation algorithm imposes no additional hardware requirements {compared} to \cite{RuoyuZhangISACwithMIMO2024,RuoyuZhangChannelTraining2025}, nor does it increase computational complexity. Furthermore, the tensor decomposition approach enables independent estimation of each target parameter, thereby overcoming the performance degradation caused by target coupling in conventional algorithms under conditions of flexible spatial-temporal-spectral optimization. 
	\begin{algorithm}[t]
		\renewcommand{\algorithmicrequire}{\textbf{Input }}  
		\renewcommand{\algorithmicensure}{\textbf{Output }}  
		\caption{Proposed Flexible Spatial-Temporal-Spectral Optimization-Based Tensor Algorithm}
		\label{Algorithm 1}
		\begin{algorithmic}[1]
			\Require 
			The received signal tensor $\bm{\mathcal{Z}}$, precoding matrix $\bm{P}$, the number of target sources $Q$, the movable rectangular region $\mathcal{C}$, the number of MAs $N_{\text{\text{re}}}$, the total number of OFDM carriers $K_0$ and OFDM symbols $M_0$, the number of symbols $K$ and subcarriers $M$.
			
			\Statex {\textbf{Stage 1 Flexible spatial-temporal-spectral optimization}}
			\State Optimize $\mathcal{X},\mathcal{Y},\mathcal{F},\mathcal{T}$ via \eqref{maxuuvv}, \eqref{maxK}, and \eqref{maxM};
			\Statex {\textbf{Stage 2 Parameter estimation}}
			\State Compute the factor matrices $\bm{C}^{(1)},\bm{C}^{(2)},\bm{C}^{(3)}$ via \eqref{AAhat}-\eqref{CChat};
			\For {each $q\in [1,\ldots,Q]$ }
			\State Estimate $\{\hat\theta_q,\hat\phi_q,\hat\omega_q,\hat\tau_q\}$ via \eqref{hatthetaphi}, \eqref{hatDop}, and \eqref{hattau};
			\EndFor
			\State Estimate $\hat{\bm\beta}$ via \eqref{hatbeta};
			\Ensure Return estimated parameters $\{\hat\theta_q,\hat\phi_q,\hat\omega_q,\hat\tau_q,\hat\beta_q\}.$
		\end{algorithmic}
	\end{algorithm}

	\section{Simulation Results}\label{section VI}

		\begin{figure}[t]
			\centering
			\includegraphics[width=0.75\linewidth]{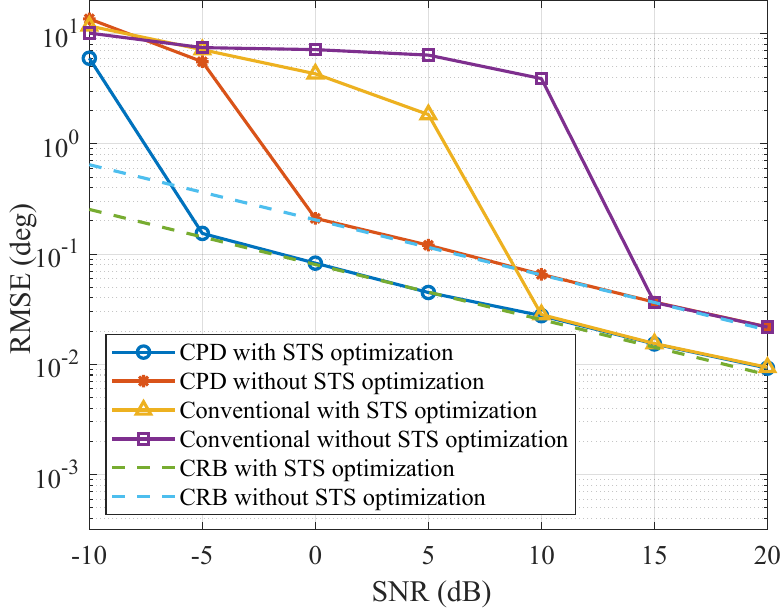}
			\caption{RMSE of elevation angles versus SNR.}
			\label{theta_SNR}
		\end{figure}
		\begin{figure}[t]
			\centering
			\includegraphics[width=0.75\linewidth]{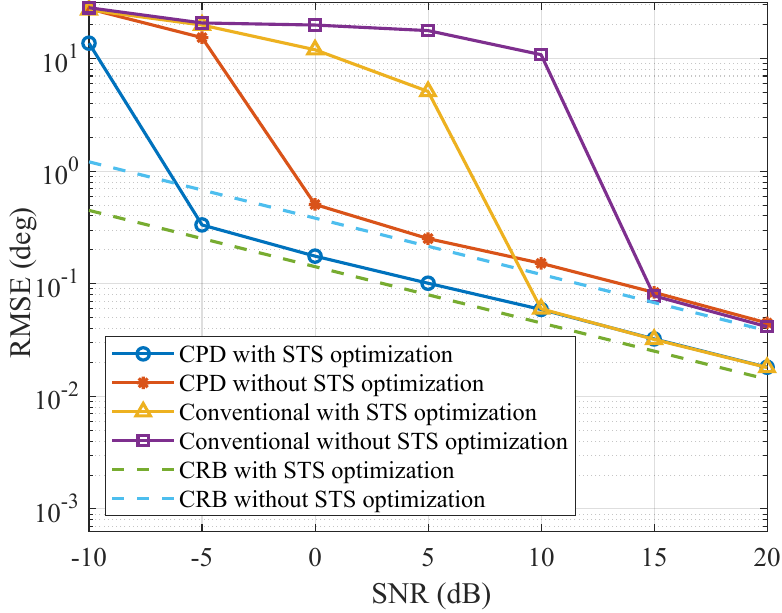}
			\caption{RMSE of azimuth angles versus SNR.}
			\label{phi_SNR}
		\end{figure}
		\begin{figure}[t]
			\centering
			\includegraphics[width=0.75\linewidth]{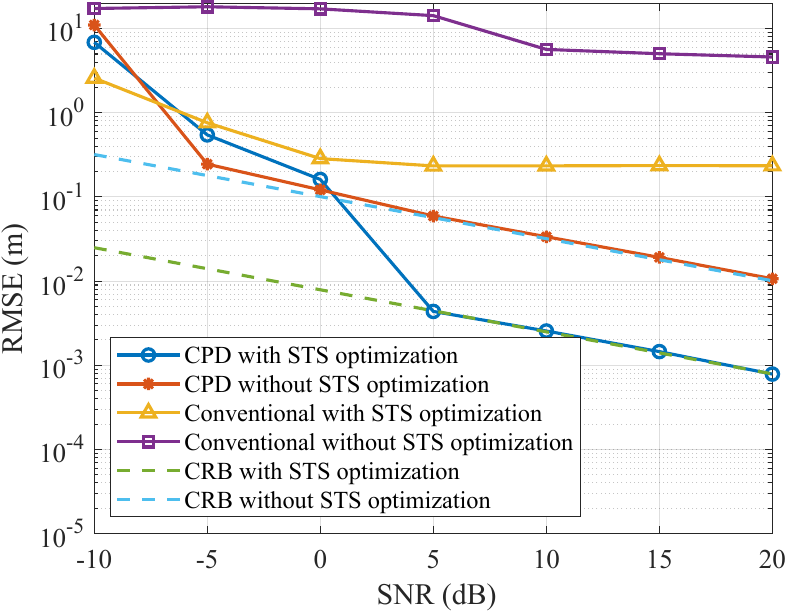}
			\caption{RMSE of target ranges versus SNR.}
			\label{tau_SNR}
		\end{figure}
		\begin{figure}[t]
			\centering
			\includegraphics[width=0.75\linewidth]{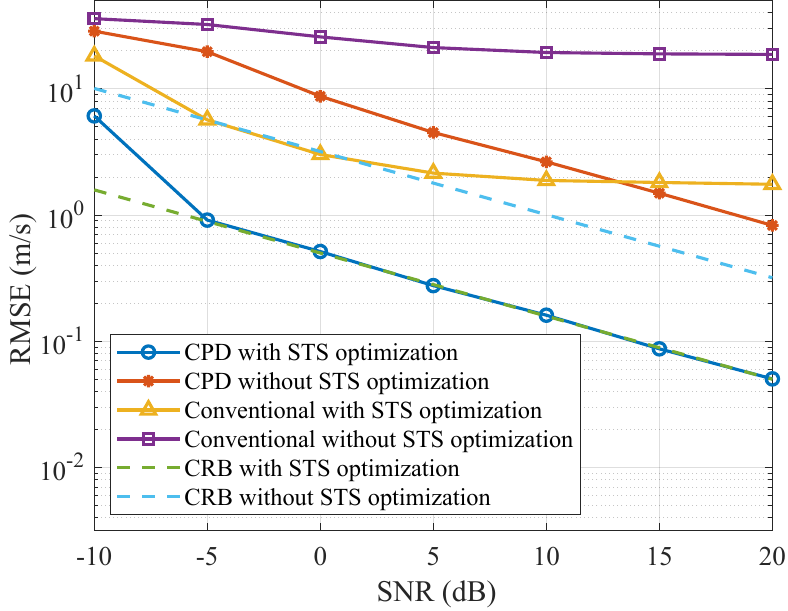}
			\caption{RMSE of velocities versus SNR.}
			\label{Doppler_SNR}
		\end{figure}
		\begin{figure}[t]
			\centering
			\includegraphics[width=0.75\linewidth]{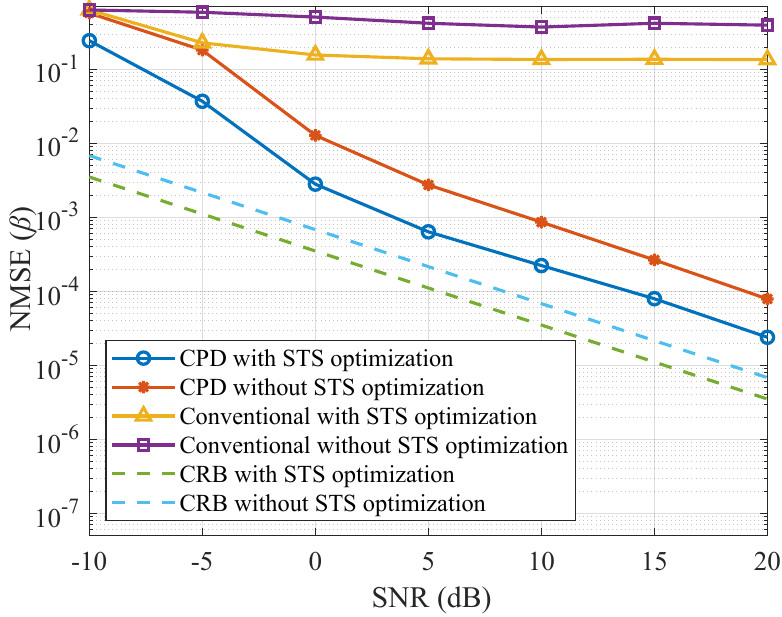}
			\caption{NMSE of reflection coefficients versus SNR.}
			\label{falpha_SNR}
		\end{figure}
		\begin{figure}[t]
			\centering
			\includegraphics[width=0.77\linewidth]{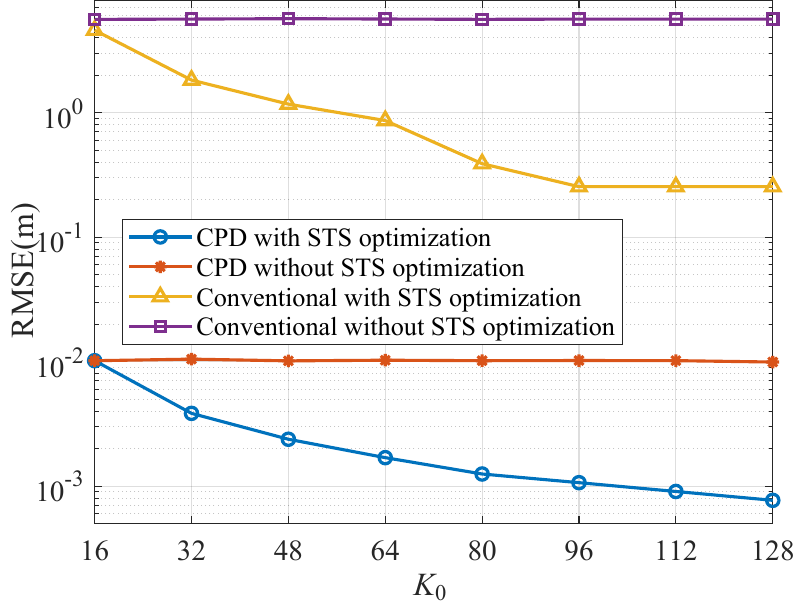}
			\caption{RMSE of target distances versus the total number of subcarriers.}
			\label{tau_T_0}
		\end{figure}
		\begin{figure}[t]
			\centering
			\includegraphics[width=0.77\linewidth]{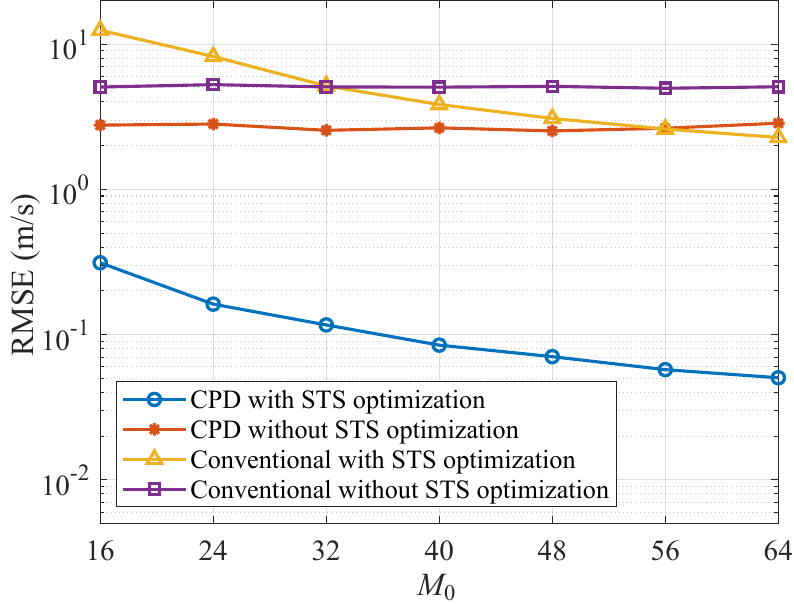}
			\caption{RMSE of velocities versus the total number of OFDM symbols.}
			\label{Doppler_T_0}
		\end{figure}

		In this section, we evaluate the performance of the proposed tensor-based flexible spatial-temporal-spectral optimization through numerical simulations. We consider a BS equipped with $N_{\text{bs}}=36$ {fixed-position transmit} antennas with $N_{\text{bs,}x}=6$, $N_{\text{bs,}y}=6$, {and} $N_{\text{\text{re}}}=36$ receive {MAs}. 
		The carrier frequency is $f_c=28\text{GHz}$ and the bandwidth is $f_{\text{s}}=100\text{MHz}$ with $K_0=128$ and $K=16$. The CP duration in one OFDM symbol is $T_{\text{cp}}={1}/{(2\Delta f)}$ with $M_0=64$ and $M=16$. The BS is detecting $Q=3$ targets, where 
		the AoAs (deg) and AoDs (deg) follow $\mathcal{U}(0,2\uppi)$,
		the distances $R$ (m) follows $\mathcal{U}(0,48)$,
		and the velocities $V$ (m/s) follows $\mathcal{U}(-30,30)$
		where the minus sign indicates the direction opposite to the positive direction. The signal-to-noise ratio (SNR) is given by 
		$\text{SNR}=||\bm{Z}_k-\bm{N}_k||_F^2/||\bm{N}||_F^2$.
		The RMSE can be defined as 
		$\text{RMSE}=\sqrt{{\sum_{i=1}^{I}\sum_{q=1}^{Q}(\hat{{\eta}}_{iq}-\eta_{q})^2}/(IQ)}$,
{		where $I$ denotes the number of Monte Carlo trials. $\eta_q\in \{\vartheta_q,\varphi_q,R_q,V_q\}_{q=1}^{Q}$ denotes the true parameter and $\hat{{\eta}}_{iq}$ denotes the $i$-th trial $q$-th estimated parameter, respectively.}
		The estimation of the reflection coefficient $\bm{\beta}$ performance at the $m$-th OFDM are calculated by the {normalized mean square error (NMSE)}, which can be given by
		$\text{NMSE}(\bm{\beta})={\|\bm{\beta}-\hat{\bm{\beta}}\|_F^2}/{\|\bm{\beta}\|_F^2}$.
		The proposed method and benchmarks are as follows:
		\begin{itemize}
			\item \textbf{CPD with STS optimization}: The proposed tensor-based method with the proposed STS optimization.
			\item \textbf{CPD without STS optimization}: The proposed tensor-based method without STS optimization.
			\item \textbf{Conventional with STS optimization}: The conventional MUSIC and MF methods \cite{Liu2020Joint} with STS optimization.\footnote{The conventional uniform steering vectors corresponding to the antenna positions, subcarrier and OFDM symbols distribution are placed by the proposed STS optimized steering vectors.}
			\item \textbf{Conventional without STS optimization}: The conventional MUSIC and MF methods without STS optimization.
			\item \textbf{CRB with STS optimization}: The theoretical CRB of the parameter estimation problem with STS optimization.
			\item \textbf{CRB without STS optimization}: The theoretical CRB of the parameter estimation problem without STS optimization.
		\end{itemize}
		
	\Cref{theta_SNR,phi_SNR} illustrate the RMSE of elevation and azimuth angle estimation versus SNR for various algorithms. By applying the proposed STS optimization, the theoretical CRBs for estimating both angles are significantly lowered. For the actual estimation performance, the conventional algorithms completely fail at low SNR regimes because severe mutual interference among multiple targets drastically degrades the accuracy. In stark contrast, the proposed algorithm based on tensor decomposition achieves substantial performance gains over conventional methods. It successfully approaches the optimized CRBs even at low SNR levels. This remarkable enhancement stems from the inherent capability of tensor decomposition to isolate individual targets and completely eliminate mutual interference among targets.
		
		\Cref{tau_SNR} presents the RMSE performance of the range estimation versus SNR. The STS optimization effectively reduces the CRB for target range estimation. The results indicate that the conventional algorithm without STS optimization performs poorly and struggles to acquire accurate delay estimation. Although incorporating STS optimization into the conventional algorithm provides a noticeable performance lift, it still suffers from an error floor at high SNR regions due to the unresolved mutual interference. Conversely, the proposed algorithm based on tensor decomposition overcomes this saturation bottleneck. By further integrating the STS optimization, the proposed method fully unlocks the spatial, temporal, and spectral degrees of freedom, achieving the lowest RMSE and tightly approaching the optimized CRB.
		
\Cref{Doppler_SNR} displays the RMSE performance of velocity estimation versus SNR for different algorithms. With the implementation of STS optimization, the corresponding CRB exhibits a significant decrease. The results explicitly reveal that conventional algorithms fail to estimate velocities accurately regardless of whether STS optimization is applied, which is primarily caused by the severe mutual interference. In contrast, the proposed algorithms based on tensor decomposition maintain robust estimation capabilities. Furthermore, the combination of the proposed algorithm and STS optimization yields the most superior results, demonstrating a distinct performance gap compared to all baseline methods and tightly approaching the lowered theoretical bound.
		
		\Cref{falpha_SNR} shows the NMSE performance of reflection coefficient estimation versus SNR. Because the reflection coefficient estimation heavily relies on previously estimated parameters including the AoA, AoD, range, and velocity of each target, any estimation error from previous stages propagates and degrades the final NMSE. Consequently, the conventional algorithms cannot accurately estimate the reflection coefficient, presenting a flat performance curve even at high SNR levels. Benefiting from the highly accurate parameter estimation provided by the aforementioned stages, the proposed algorithm based on tensor decomposition successfully circumvents this error propagation. Empowered by the STS optimization, the proposed algorithm achieves the most outstanding NMSE performance and strictly follows the minimized CRB.
		
		\Cref{tau_T_0} presents the RMSE of range estimation versus the total number of subcarriers for the algorithms based on tensor decomposition, given a fixed subcarrier spacing and an SNR of $20$ dB. The algorithm without the STS optimization exhibits a constant performance curve regardless of the total number of subcarriers. This occurs because the unoptimized contiguous subcarrier allocation fails to fully exploit the newly added frequency resources. In contrast, the proposed algorithm equipped with the STS optimization demonstrates progressively improved performance as the total number of subcarriers increases. By widely spreading the selected subcarriers across the entire available band, the STS optimization effectively maximizes the frequency aperture. Consequently, the proposed algorithm with STS optimization maintains significantly better performance compared with the unoptimized baselines.
		
\Cref{Doppler_T_0} shows the RMSE of velocity estimation versus the total number of OFDM symbols for the algorithms based on tensor decomposition with an SNR of $20$ dB. The algorithm without STS optimization maintains a consistent error level across different total numbers of OFDM symbols due to its restrictive contiguous temporal allocation. However, the proposed algorithm enhanced by the STS optimization exhibits progressively improved performance as the total number of OFDM symbols increases. By distributing the selected OFDM symbols toward the two ends of the available time frame, the STS optimization maximizes the effective observation time and refines the Doppler resolution. Similar to the range estimation scenario, the proposed algorithm with STS optimization consistently achieves significantly better performance than the conventional algorithms.

		\section{Conclusion}\label{section VII}
		
	This paper has proposed a novel wireless sensing framework for MIMO-OFDM ISAC systems to {achieve} flexible optimization across the spatial-temporal-spectral dimensions. We have established a monostatic sensing model where antenna positions, OFDM symbol allocation, and subcarrier assignment can be dynamically configured, and developed a tensor {decomposition-based} approach that transforms target parameter estimation into a parallelizable process. Our derivation of the CRB reveals fundamental relationships between estimation accuracy and system configuration, demonstrating that the performance bounds for azimuth/elevation angles, velocities, and ranges are intrinsically determined by array geometry and time-frequency resource distribution. Building on this theoretical foundation, we have obtained a globally optimal solution for joint spatial-temporal-spectral resource allocation that simultaneously minimizes both the CRB and mean square error of estimation. Extensive numerical results validate that the proposed framework significantly outperforms conventional {sensing} systems with fixed-position antennas and static resource allocation, {showcasing its strong potential} to enhance sensing accuracy in next-generation {wireless} networks.
		
		\appendices
		\section{Derivation of CRB}\label{Appendix A}
		In this {appendix}, we present the {derivation} for the submatrices in equations \eqref{CRBUU}-\eqref{CRBNN}. 
		\subsection{{Derivation of \textbf{CRB}$_{\vartheta\vartheta}$ and \textbf{{CRB}}$_{\varphi\varphi}$ }}
		For \eqref{AAhat}, by vectorizing $\bm{\hat {C}}^{(1)}$, we obtain
		\begin{align}
			\text{vec} (\bm{\hat {C}}^{(1)})=
			{{\bm{\hat{c}}}^{(1)}} = \big[ {{\bm{I}}_{Q \times Q}} \otimes {{\bm{C}}^{(1)}} \big]{{\bm{\delta }}^{(1)}} + {{\bm{e}}^{(1)}} \label{C1vec},
		\end{align}
		where ${{\bm{\delta }}^{(1)}} = \operatorname{vec}\big( {{\bm{\varDelta }}^{(1)}} \big)$, ${{\bm{e}}^{(1)}} = \operatorname{vec}\big( {{\bm{E}}^{(1)}} \big) $. From \eqref{C1vec}, ${{\bm{\hat{c}}}^{(1)}}$ can be viewed as a distribution with the mean vector ${{\bm{\mu }}^{(1)}}$ and covariance matrix ${{\bm{H}}^{(1)}}$, which are given by
		\begin{align}
			{{\bm{\mu }}^{(1)}} & = \big[ {{\bm{I}}_{Q \times Q}} \otimes {{\bm{C}}^{(1)}} \big]{{\bm{\delta }}^{(1)}}, \\
			{{\bm{H}}^{(1)}} & = \mathbb{E}\big\{ {{\bm{e}}^{(1)}}{{\bm{e}}^{(1)\text{H}}} \big\} = \sigma_{e^{(1)}}^{2}{{\bm{I}}_{N_{\text{re}}Q \times N_{\text{re}}Q}}.
		\end{align}
		
		Since we are interested in the parameters, we can construct a Fisher Information Matrix (FIM) based on \eqref{C1vec}
		\begin{align}
			{{\bm{F}}^{(1)}}(\bm{\vartheta}, \bm{\varphi}) = \big[ \begin{matrix}
				{{\bm{F}}_{\bm{\vartheta} \bm{\vartheta}}} & {{\bm{F}}_{\bm{\vartheta} \bm{\varphi}}} \\
				\bm{F}_{\bm{\vartheta} \bm{\varphi}}^{\text{H}} &  \bm{F}_{\bm{\varphi} \bm{\varphi}} \\
			\end{matrix} \big] \in \mathbb{C}^{2Q \times 2Q},
		\end{align}
		where the $(q,q')$-th element of ${{\bm{F}}^{(1)}}(\bm{\vartheta}, \bm{\varphi})$ can be given by
		\begin{align}
			{{\big[ {{\bm{F}}^{(1)}} \big]}_{q,q'}} &= 2\mathfrak{R} \big[ \frac{\partial {{\bm{\mu }}^{(1)\text{H}}}}{\partial \eta_{q}} [{{\bm{H}}^{(1)}}]^{-1} \frac{\partial {{\bm{\mu }}^{(1)}}}{\partial \eta_{q'}} \big]\label{yuan}.
		\end{align}
		
		The mean vector ${{\bm{\mu }}^{(1)}}$ is a function of $\bm{\vartheta}$ and $\bm{\varphi}$. Its first-order derivatives with respect to $\bm{\vartheta}$ and $\bm{\varphi}$ are given by
		\begin{align}
			\frac{\partial {{\bm{\mu }}^{(1)}}}{\partial \bm{\vartheta}} & = {{\big[ \bm{\varLambda}{{_{1}^{(1)\text{H}}}} \bm{C}{{_{\bm{\vartheta}}^{(1)\text{H}}}}, \ldots, \bm{\varLambda}{{_{Q}^{(1)\text{H}}}} \bm{C}{{_{\bm{\vartheta}}^{(1)\text{H}}}} \big]}^{\text{T}}}, \\
			\frac{\partial {{\bm{\mu }}^{(1)}}}{\partial \bm{\varphi}}  & = {{\big[ \bm{\varLambda}{{_{1}^{(1){\textsc{H}}}}} \bm{C}{{_{\bm{\varphi}}^{(1)\text{H}}}}, \ldots, \bm{\varLambda}{{_{Q}^{(1H}}} \bm{C}{{_{\bm{\varphi}}^{(1)\text{H}}}} \big]}^{\text{T}}},
		\end{align}
		where $\bm{\varLambda}_{q}^{(1)} = \diag ( {{[{{\bm{\varDelta }}^{(1)}}]}_{(:,q)}} )$. The matrices $\bm{C}_{\bm{\vartheta}}^{(1)}$ and $\bm{C}_{\bm{\varphi}}^{(1)}$ consist of the derivatives of the columns of $\bm{C}^{(1)}$ with respect to the corresponding $\bm{\vartheta}$ and $\bm{\varphi}$, respectively, and are given by
		\begin{align}
			\bm{C}_{\bm{\vartheta}}^{(1)} & = \big[ {{\frac{\partial {{\bm{a}}_{\text{\text{re}}}}}{\partial \vartheta} \big|}_{\vartheta = \vartheta_{1}}}, \ldots, {{\big. \frac{\partial {{\bm{a}}_{\text{\text{re}}}}}{\partial \vartheta} \big|}_{\vartheta = \vartheta_{Q}}} \big] \in \mathbb{C}^{N_{\text{re}} \times Q} \label{11dot},\\
			\bm{C}_{\bm{\varphi}}^{(1)}  & = \big[ {{\big. \frac{\partial {{\bm{a}}_{\text{\text{re}}}}}{\partial \varphi} \big|}_{\varphi = \varphi_{1}}}, \ldots, {{\big. \frac{\partial {{\bm{a}}_{\text{\text{re}}}}}{\partial \varphi} \big|}_{\varphi = \varphi_{Q}}} \big] \in \mathbb{C}^{N_{\text{re}} \times Q}\label{22dot}.
		\end{align}
		
		Using \eqref{yuan}, we obtain the following expressions:
		\begin{align}
			&\!\!{{\bm{F}}_{\bm{\vartheta} \bm{\vartheta}}} =\frac{2}{\sigma_{e^{(1)}}^{2}} \mathfrak{R} \big[ \sum_{q=1}^{Q} \bm{\varLambda}{{_{q}^{(1)\text{H}}}} \bm{C}{{_{\bm{\vartheta}}^{(1)\text{H}}}} \bm{C}_{\bm{\vartheta}}^{(1)} \bm{\varLambda}_{q}^{(1)} \big] \label{F11},\\
			&\!\!{{\bm{F}}_{\bm{\varphi} \bm{\varphi}}} = \frac{2}{\sigma_{e^{(1)}}^{2}} \mathfrak{R} \big[ \sum_{q=1}^{Q} \bm{\varLambda}{{_{q}^{(1)\text{H}}}} \bm{C}{{_{\bm{\varphi}}^{(1)\text{H}}}} \bm{C}_{\bm{\varphi}}^{(1)} \bm{\varLambda}_{q}^{(1)} \big]\label{F22}, \\
			\!\!&{{\bm{F}}_{\bm{\vartheta} \bm{\varphi}}} = \frac{2}{\sigma_{e^{(1)}}^{2}}\mathfrak{R} \big[ \sum_{q=1}^{Q} \bm{\varLambda}{{_{q}^{(1)\text{H}}}} \bm{C}{{_{\bm{\vartheta}}^{(1)\text{H}}}} \bm{C}_{\bm{\varphi}}^{(1)} \bm{\varLambda}_{q}^{(1)} \big]\label{F12}.
		\end{align}
		
		Applying the block matrix inversion formula, we obtain the CRBs
		\begin{align}
			\bm{CRB}_{{\bm{\vartheta}} {\bm{\vartheta}}} & = \big[{{\bm{F}}_{\bm{\vartheta} \bm{\vartheta}}} - {{\bm{F}}_{\bm{\vartheta} \bm{\varphi}}} \bm{F}_{\varphi \varphi}^{-1} \bm{F}_{\bm{\vartheta} \bm{\varphi}}^{\text{H}}\big]^{-1}, \\
			\bm{CRB}_{\bm{\varphi} \bm{\varphi}}   & = \big[\bm{F}_{\bm{\varphi} \bm{\varphi}} - \bm{F}_{\bm{\vartheta} \bm{\varphi}}^{\text{H}} \bm{F}_{\bm{\vartheta} \bm{\vartheta}}^{-1} {{\bm{F}}_{\bm{\vartheta} \bm{\varphi}}}\big]^{-1}.
		\end{align}
		
		\subsection{\textcolor{black}{Derivation of  $\textbf{{CRB}}_{{\omega}\omega}$} }
		For \eqref{tildeBBhat}, by vectorizing ${{\bm{\tilde{\hat{C}}}^{(2)}}}$, we obtain
		\begin{align}
			\text{vec}({{\bm{\tilde{\hat{C}}}}^{(2)}})=
			{{\bm{\tilde{\hat{c}}}}^{(2)}} = \big[ {{\bm{I}}_{Q \times Q}} \otimes {{\bm{\tilde{C}}}^{(2)}} \big]{{\bm{\delta }}^{(2)}} + {{\bm{\tilde{e}}}^{(2)}} \label{33cc},
		\end{align}
		where ${{\bm{\delta }}^{(2)}} = \operatorname{vec}\big( {{\bm{\varDelta }}^{(2)}} \big)$, ${{\bm{\tilde{e}}}^{(2)}} = \operatorname{vec}\big( {{\bm{\tilde{E}}}^{(2)}} \big) $, and ${{\bm{\tilde{\hat{c}}}}^{(2)}}$ can be viewed as a distribution with mean vector ${{\bm{\mu }}^{(2)}} = [ {{\bm{I}}_{Q \times Q}} \otimes {{\bm{\tilde{C}}}^{(2)}} ]{{\bm{\delta }}^{(2)}}$ and covariance matrix ${{\bm{H}}^{(2)}} = \mathbb{E}\{ {{\bm{\tilde{e}}}^{(2)}}{{\bm{\tilde{e}}}^{(2)\text{H}}} \} = \sigma_{\tilde{e}^{(2)}}^{2}{{\bm{I}}_{MQ \times MQ}}$.
		Accordingly, the FIM with respect to $\omega$ and ${{\bm{\delta }}^{(2)}}$ are given by
{		\begin{align}
			\!{{\bm{F}}^{(2)}}(\bm{\omega}, {{\bm{\delta }}^{(2)}}) = \big[ \begin{matrix}
				{{\bm{F}}_{\bm{\omega} \bm{\omega}}} \!\!&\!\! {{\bm{F}}_{\bm{\omega} {{\bm{\delta }}^{(2)}}}} \\
				{{\bm{F}}^{\text{H}}_{\bm{\omega} {{\bm{\delta }}^{(2)}}}} \!\!&\!\! {{\bm{F}}_{{{\bm{\delta }}^{(2)}}{{\bm{\delta }}^{(2)}}}} \\
			\end{matrix} \big] \in \mathbb{C}^{(2Q^2+Q) \times (2Q^2+Q)},
		\end{align}}
		where
		\begin{align}
			\!\!{{\bm{F}}_{{{\bm{\delta }}^{(2)}}{{\bm{\delta }}^{(2)}}}} & = \big[ \begin{matrix}
				{{\bm{F}}_{\mathfrak{R}[{{\bm{\delta }}^{(2)}}]\mathfrak{R}[{{\bm{\delta }}^{(2)}}]}} \!\!&\!\! {{\bm{F}}_{\mathfrak{R}[{{\bm{\delta }}^{(2)}}]\mathfrak{I}[{{\bm{\delta }}^{(2)}}]}} \\
				{{\bm{F}}^{\text{H}}_{\mathfrak{R}[{{\bm{\delta }}^{(2)}}]\mathfrak{I}[{{\bm{\delta }}^{(2)}}]}} \!\!&\!\! {{\bm{F}}_{\mathfrak{I}[{{\bm{\delta }}^{(2)}}]\mathfrak{I}[{{\bm{\delta }}^{(2)}}]}} \\
			\end{matrix} \big] \in \mathbb{C}^{2Q^2 \times 2Q^2}, \\
			{{\bm{F}}_{\bm{\omega} {{\bm{\delta }}^{(2)}}}} & = \big[ {{\bm{F}}_{\bm{\omega} \mathfrak{R}[{{\bm{\delta }}^{(2)}}]}}, {{\bm{F}}_{\bm{\omega} \mathfrak{I}[{{\bm{\delta }}^{(2)}}]}} \big].
 		\end{align}
		The first-order derivative of ${{\bm{\mu }}^{(2)}}$ with respect to $\bm{\omega}$ is given by
		\begin{align}
			\frac{\partial {{\bm{\mu }}^{(2)}}}{\partial \bm{\omega}} = {{\big[ \bm{\varLambda}_{1}^{(2)\text{H}} \bm{\tilde{C}}_{\bm{\omega}}^{(2)\text{H}}, \ldots, \bm{\varLambda}_{Q}^{(2)\text{H}} \bm{\tilde{C}}_{\bm{\omega}}^{(2)\text{H}} \big]}^{\text{T}}},
		\end{align}
		where $\bm{\varLambda}_{q}^{(2)} = \operatorname{diag}( {{[{{\bm{\varDelta }}^{(2)}}]}_{(:,q)}} )$. The matrix $\bm{\tilde{C}}_{\bm{\omega}}^{(2)}$ consists of the derivatives of the columns of $\bm{\tilde{C}}^{(2)}$ with respect to the corresponding $\bm{\omega}$, and is given by
		\begin{align}
			\bm{\tilde{C}}_{\bm{\omega}}^{(2)} = \big[ {{\big. \frac{\partial {{\bm{a}}_{\text{do}}}}{\partial \omega} \big|}_{\omega = \omega_{1}}}, \ldots, {{\big. \frac{\partial {{\bm{a}}_{\text{do}}}}{\partial \omega} \big|}_{\omega = \omega_{Q}}} \big] \in \mathbb{C}^{M \times Q}.
		\end{align}
		Similar to \eqref{yuan}, we obtain the following expressions
		\begin{align}
			\!\!{{\bm{F}}_{\!\bm{\omega}  \bm{\omega}}} = \frac{2}{\sigma_{\tilde{e}^{(2)}}^{2}} \mathfrak{R} \big[ \sum_{q=1}^{K N_{\text{re}}} \bm{\varLambda}_{q}^{(2)\text{H}} \bm{\tilde{C}}_{\bm{\omega}}^{(2)\text{H}} \bm{\tilde{C}}_{\bm{\omega}}^{(2)} \bm{\varLambda}_{q}^{(2)} \big].
		\end{align}
		Since ${{\bm{\delta }}^{(2)}}$ is a complex-valued vector, we consider derivatives with respect to its real and imaginary parts separately
		\begin{align}
			\frac{\partial {{\bm{\mu }}^{(2)}}}{\partial \mathfrak{R}[{{\bm{\delta }}^{(2)}}]}  = {{\bm{I}}_{Q \times Q}} \otimes {{\bm{\tilde{C}}}^{(2)}} ,\quad\!\!\!\!
			\frac{\partial {{\bm{\mu }}^{(2)}}}{\partial \mathfrak{I}[{{\bm{\delta }}^{(2)}}]}  = \text{j}{{\bm{I}}_{Q \times Q}} \otimes {{\bm{\tilde{C}}}^{(2)}}.
		\end{align}
		
		Thus, we obtain
		\begin{align}
			&\!\!\!\!\!{{\bm{F}}_{\mathfrak{R}[{{\bm{\delta }}^{(2)}}]\mathfrak{R}[{{\bm{\delta }}^{(2)}}]}} = \frac{2}{\sigma_{\tilde{e}^{(2)}}} \mathfrak{R} \big[ {{\bm{I}}_{Q \times Q}} \otimes \big( {{\bm{\tilde{C}}}^{(2)}}{{\bm{\tilde{C}}}^{(2)\text{H}}} \big) \big], \\
			&\!\!\!\!\!{{\bm{F}}_{\mathfrak{I}[{{\bm{\delta }}^{(2)}}]\mathfrak{I}[{{\bm{\delta }}^{(2)}}]}} = -\frac{2}{\sigma_{\tilde{e}^{(2)}}} \mathfrak{R} \big[ {{\bm{I}}_{Q \times Q}} \otimes \big( {{\bm{\tilde{C}}}^{(2)}}{{\bm{\tilde{C}}}^{(2)\text{H}}} \big) \big], \\
			&\!\!\!\!\!{{\bm{F}}_{\mathfrak{R}[{{\bm{\delta }}^{(2)}}]\mathfrak{I}[{{\bm{\delta }}^{(2)}}]}} = \frac{2}{\sigma_{\tilde{e}^{(2)}}} \mathfrak{R} \big[ {{\bm{I}}_{Q \times Q}} \otimes \big( \text{j}{{\bm{\tilde{C}}}^{(2)}}{{\bm{\tilde{C}}}^{(2)\text{H}}} \big) \big], \\
			&\!\!\!\!\!{{\bm{F}}_{\bm{\omega}\! \mathfrak{R}[\!{{\bm{\delta }}^{(2)}}\!]}}  \!=\! \frac{2}{\!\sigma_{\!\tilde{e}^{(2)}}}\! \mathfrak{R} \!\big[ \bm{\varLambda}_{1}^{(2)\text{H}} \bm{\tilde{C}}_{\bm{\omega}}^{(2)\text{H}} {{\bm{\tilde{C}}}^{(2)}}, \ldots, \bm{\varLambda}_{Q}^{(2)\text{H}} \bm{\tilde{C}}_{\bm{\omega}}^{(2)\text{H}}\! {{\bm{\tilde{C}}}^{(2)}} \big], \\ 
			&\!\!\!\!\!{{\bm{F}}_{\bm{\omega} \!\mathfrak{I}[\!{{\bm{\delta }}^{\!(2)}}\!]}} \!=\! \frac{2}{\!\sigma_{\!\tilde{e}^{(2)}}}\! \mathfrak{R}\! \big[ \text{j}\bm{\varLambda}_{1}^{(2)\text{H}} \bm{\tilde{C}}_{\bm{\omega}}^{(2)\text{H}} {{\bm{\tilde{C}}}^{(2)}}\!, \ldots, \text{j}\bm{\varLambda}_{Q}^{(2)\text{H}} \bm{\tilde{C}}_{\bm{\omega}}^{(2)\text{H}} \!{{\bm{\tilde{C}}}^{(2)}}\! \big] .
		\end{align}
		Applying the block matrix inversion property, we obtain
		\begin{align}
			\bm{CRB}_{\bm{\omega} \bm{\omega}}= \big[{{\bm{F}}_{\bm{\omega} \bm{\omega}}} - {{\bm{F}}_{\bm{\omega} {{\bm{\delta }}^{(2)}}}} {{\bm{F}}_{{{\bm{\delta }}^{(2)}}{{\bm{\delta }}^{(2)}}}} {{\bm{F}}^{\text{H}}_{\bm{\omega} {{\bm{\delta }}^{(2)}}}}\big]^{-1}.
		\end{align}
		\subsection{\textcolor{black}{Derivation of \textbf{{CRB}}$_{\tau\tau}$} }
		For \eqref{CChat}, by vectorizing $ {{\bm{\hat{C}}}^{(3)}}$, we obtain
		\begin{align}
			\text{vec}({{\bm{\hat{C}}}^{(3)}})=
			{{\bm{\hat{c}}}^{(3)}} = \big[ {{\bm{I}}_{Q \times Q}} \otimes {{\bm{C}}^{(3)}} \big]{{\bm{\delta }}^{(3)}} + {{\bm{e}}^{(3)}}, \label{22BB}
		\end{align}
		where ${{\bm{\delta }}^{(3)}} = \operatorname{vec}\big( {{\bm{\varDelta }}^{(3)}} \big)$, ${{\bm{e}}^{(3)}} $, and ${{\bm{\hat{c}}}^{(3)}}$ can be viewed as a distribution with mean vector ${{\bm{\mu }}^{(3)}} = [ {{\bm{I}}_{Q \times Q}} \otimes {{\bm{C}}^{(3)}} ]{{\bm{\delta }}^{(3)}}$ and covariance matrix ${{\bm{H}}^{(3)}} = \mathbb{E}\{ {{\bm{e}}^{(3)}}{{\bm{e}}^{(3)\text{H}}} \} = \sigma_{e^{(3)}}^{2}{{\bm{I}}_{N_{bs}Q \times N_{bs}Q}}$.
		
		We construct the FIM with respect to $\bm{\tau}$ and $\bm{\beta}$ as
		\begin{align}
			{{\bm{F}}_{(3)}}(\bm{\tau}, \bm{\beta}) =  \begin{bmatrix}
				{{\bm{F}}_{\bm{\tau} \bm{\tau}}} & {{\bm{F}}_{\bm{\tau} \bm{\beta}}} \\
				{{\bm{F}}^{\text{H}}_{\bm{\tau} \bm{\beta}}}& {{\bm{F}}_{\bm{\beta} \bm{\beta}}} \\
			\end{bmatrix} \in \mathbb{C}^{3Q \times 3Q}.
		\end{align}
		Since $\bm{\beta}$ is a complex-valued vector, we need to take derivatives with respect to its real and imaginary parts separately. Thus,
		\begin{align}
			{{\bm{F}}_{\bm{\beta} \bm{\beta}}} & =  \begin{bmatrix}
				{{\bm{F}}_{\mathfrak{R}[\bm{\beta}]\mathfrak{R}[\bm{\beta}]}} & {{\bm{F}}_{\mathfrak{R}[\bm{\beta}]\mathfrak{I}[\bm{\beta}]}} \\
				{{\bm{F}}^{\text{H}}_{\mathfrak{R}[\bm{\beta}]\mathfrak{I}[\bm{\beta}]}} & {{\bm{F}}_{\mathfrak{I}[\bm{\beta}]\mathfrak{I}[\bm{\beta}]}} \\
			\end{bmatrix}\in \mathbb{C}^{2Q \times 2Q}, \\
			{{\bm{F}}_{\bm{\tau} \bm{\beta}}} & = \big[ {{\bm{F}}_{\bm{\tau} \mathfrak{R}[\bm{\beta}]}}, {{\bm{F}}_{\bm{\tau} \mathfrak{I}[\bm{\beta}]}} \big] .
		\end{align}
		The first-order derivative of ${{\bm{\mu }}^{(3)}}$ with respect to $\bm{\tau}$ is given by
		\begin{align}
			\frac{\partial {{\bm{\mu }}^{(3)}}}{\partial \bm{\tau}} = {{\big[ \bm{\varLambda}_{1}^{(3)\text{H}} \bm{C}_{\bm{\tau}}^{(3)\text{H}}, \ldots, \bm{\varLambda}_{Q}^{(3)\text{H}} \bm{C}_{\bm{\tau}}^{(3)\text{H}} \big]}^{\text{T}}},
		\end{align}
		where $\bm{\varLambda}_{q}^{(3)} = {
			\diag}( {{[{{\bm{\varDelta }}^{(3)}}]}_{(:,q)}} )$. The matrix $\bm{C}_{\bm{\tau}}^{(3)}$ consists of the derivatives of the columns of $\bm{C}^{(3)}$ with respect to the corresponding $\bm{\tau}$, and is given by
		\begin{align}
			\bm{C}_{\bm{\tau}}^{(3)} = \Big[ {{\big. \frac{\partial {{\bm{c}}^{(3)}}}{\partial \tau} \big|}_{\tau = \tau_{1}}}, \ldots, {{\big. \frac{\partial {{\bm{c}}^{(3)}}}{\partial \tau} \big|}_{\tau = \tau_{Q}}} \Big] \in \mathbb{C}^{N_{\text{bs}} \times Q}.
		\end{align}
		Using \eqref{22BB}, we obtain the following expression:
		\begin{align}
			\!\!{{\bm{F}}_{\bm{\tau} \bm{\tau}}}= \frac{2}{\sigma_{e^{(3)}}^{2}} \mathfrak{R} \big[ \sum_{q=1}^{Q} \bm{\varLambda}_{q}^{(3)\text{H}} \bm{C}_{\bm{\tau}}^{(3)\text{H}} \bm{C}_{\bm{\tau}}^{(3)} \bm{\varLambda}_{q}^{(3)} \big].
		\end{align}
		Considering derivatives with respect to the real and imaginary parts of $\bm{\beta}$, we obtain:
		\begin{align}
			\frac{\partial {{\bm{\mu }}^{(3)}}}{\partial \mathfrak{R}[\bm{\beta}]} & = {{\big[ \bm{\varLambda}_{1}^{(3)\text{H}} \bm{C}_{\mathfrak{R}[\bm{\beta}]}^{(3)\text{H}}, \ldots, \bm{\varLambda}_{Q}^{(3)\text{H}} \bm{C}_{\mathfrak{R}[\bm{\beta}]}^{(3)\text{H}} \big]}^{\text{T}}}, \\
			\frac{\partial {{\bm{\mu }}^{(3)}}}{\partial \mathfrak{I}[\bm{\beta}]} & = {{\big[ \bm{\varLambda}_{1}^{(3)\text{H}} \bm{C}_{\mathfrak{I}[\bm{\beta}]}^{(3)\text{H}}, \ldots, \bm{\varLambda}_{Q}^{(3)\text{H}} \bm{C}_{\mathfrak{I}[\bm{\beta}]}^{(3)\text{H}} \big]}^{\text{T}}},
		\end{align}
		where $\bm{C}_{\mathfrak{R}[\bm{\beta}]}^{(3)}$ and $\bm{C}_{\mathfrak{I}[\bm{\beta}]}^{(3)}$ consist of the derivatives of the columns of $\bm{C}^{(3)}$ with respect to the corresponding $\mathfrak{R}[\bm{\beta}]$ and $\mathfrak{I}[\bm{\beta}]$, respectively, and are given by
		\begin{align}
			\bm{C}_{\mathfrak{R}[\bm{\beta}]}^{(3)} & = \Big[ {{\big. \frac{\partial {{\bm{c}}^{(3)}}}{\partial \mathfrak{R}[\beta]} \big|}_{\beta = \beta_{1}}}, \ldots, {{\big. \frac{\partial {{\bm{c}}^{(3)}}}{\partial \mathfrak{R}[\beta]} \big|}_{\beta = \beta_{Q}}} \Big] \in \mathbb{C}^{N_{\text{bs}} \times Q}, \\
			\bm{C}_{\mathfrak{I}[\bm{\beta}]}^{(3)} & = \Big[ {{\big. \frac{\partial {{\bm{c}}^{(3)}}}{\partial \mathfrak{I}[\beta]} \big|}_{\beta = \beta_{1}}}, \ldots, {{\big. \frac{\partial {{\bm{c}}^{(3)}}}{\partial \mathfrak{I}[\beta]} \big|}_{\beta = \beta_{Q}}} \Big] \in \mathbb{C}^{N_{\text{bs}} \times Q}.
		\end{align}
		Thus, we obtain
		\begin{align}
			{{\bm{F}}_{\mathfrak{R}[\bm{\beta}]\mathfrak{R}[\bm{\beta}]}} & = \frac{2}{\sigma_{e^{(3)}}^{2}} \mathfrak{R} \big[ \sum_{q=1}^{Q} \bm{\varLambda}_{q}^{(3)\text{H}} \bm{C}_{\mathfrak{R}[\bm{\beta}]}^{(3)\text{H}} \bm{C}_{\mathfrak{R}[\bm{\beta}]}^{(3)} \bm{\varLambda}_{q}^{(3)} \big], \\
			{{\bm{F}}_{\mathfrak{I}[\bm{\beta}]\mathfrak{I}[\bm{\beta}]}} & =  \frac{2}{\sigma_{e^{(3)}}^{2}} \mathfrak{R} \big[ \sum_{q=1}^{Q} \bm{\varLambda}_{q}^{(3)\text{H}} \bm{C}_{\mathfrak{I}[\bm{\beta}]}^{(3)\text{H}} \bm{C}_{\mathfrak{I}[\bm{\beta}]}^{(3)} \bm{\varLambda}_{q}^{(3)} \big], \\
			{{\bm{F}}_{\mathfrak{R}[\bm{\beta}]\mathfrak{I}[\bm{\beta}]}} & = \frac{2}{\sigma_{e^{(3)}}^{2}} \mathfrak{R} \big[ \sum_{q=1}^{Q} \bm{\varLambda}_{q}^{(3)\text{H}} \bm{C}_{\mathfrak{R}[\bm{\beta}]}^{(3)\text{H}} \bm{C}_{\mathfrak{I}[\bm{\beta}]}^{(3)} \bm{\varLambda}_{q}^{(3)} \big], 
		\end{align}
		\begin{align}
		 {{\bm{F}}_{\tau \mathfrak{R}[\bm{\beta}]}}& = \frac{2}{\sigma_{e^{(3)}}^{2}} \mathfrak{R} \big[ \sum_{q=1}^{Q} \bm{\varLambda}_{q}^{(3)\text{H}} \bm{C}_{\tau}^{(3)\text{H}} \bm{C}_{\mathfrak{R}[\bm{\beta}]}^{(3)} \bm{\varLambda}_{q}^{(3)} \big], \\
			{{\bm{F}}_{\bm{\tau} \mathfrak{I}[\bm{\beta}]}} & = \frac{2}{\sigma_{e^{(3)}}^{2}} \mathfrak{R} \big[ \sum_{q=1}^{Q} \bm{\varLambda}_{q}^{(3)\text{H}} \bm{C}_{\bm{\tau}}^{(3)\text{H}} \bm{C}_{\mathfrak{I}[\bm{\beta}]}^{(3)} \bm{\varLambda}_{q}^{(3)} \big].
		\end{align}
		Applying the block matrix inversion property, we obtain
		\begin{align}
			\bm{CRB}_{\bm{\tau} \bm{\tau}} = \big[{{\bm{F}}_{\bm{\tau} \bm{\tau}}} - {{\bm{F}}_{\bm{\tau} \bm{\beta}}} \bm{F}_{\bm{\beta} \bm{\beta}}^{-1} {{\bm{F}}^{\text{H}}_{\bm{\tau} \bm{\beta}}}\big]^{-1}.
		\end{align}
		
		To this end, we have derived the expressions for $\bm{CRB}_{\bm{\vartheta}\bm{\vartheta}}$, $\bm{CRB}_{\bm{\varphi}\bm{\varphi}}$, $\bm{CRB}_{\bm{\tau}\bm{\tau}}$, and $\bm{CRB}_{\bm{\omega}\bm{\omega}}$, along with explanations for all submatrix parameters.
		\vspace{-0.5em}
		\section{Simplification of CRB Specific Expressions for each target}\label{Appendix B}
		In this {appendix}, we derive the CRB for each individual target. 
		For the case of a single target source, equations \eqref{11dot} and \eqref{22dot} can be simplified as
		\begin{align}
			\tilde{\bm{a}}_{\text{\text{re}},\vartheta}(\mathcal{X},\mathcal{Y},\vartheta,\varphi) &= -\text{j}\frac{2\uppi}{\lambda_c}\bm{D}_x\bm{a}_{\text{\text{re}}}(\mathcal{X},\mathcal{Y},\vartheta,\varphi), \label{start}\\
			\tilde{\bm{a}}_{\text{\text{re}},\varphi}(\mathcal{X},\mathcal{Y},\vartheta,\varphi) &= -\text{j}\frac{2\uppi}{\lambda_c}\bm{D}_y\bm{a}_{\text{\text{re}}}(\mathcal{X},\mathcal{Y},\vartheta,\varphi),
		\end{align}
		where $\bm{D}_x =\diag(x_1,\ldots,x_{N_{\text{re}}})$ and $\bm{D}_y = \diag(y_1,\ldots,y_{N_{\text{re}}})$. 
		\eqref{F11}, \eqref{F22} and \eqref{F12} can be simplified to 
		\begin{align}\label{fuu}
			&\!\!\!\!F_{\vartheta\!\vartheta}\! = \!\frac{2}{{\sigma _{{e^{(1)}}}^2}}\mathfrak{R} \big[ {{\delta ^{(1)*}}{{ {\frac{{\partial {{\bm{a}}_{\text{\text{re}}}^{\text{H}}}}}{{\partial \vartheta }}} }}\frac{{\partial {{\bm{a}}_{\text {\text{re}}}}}}{{\partial \vartheta }}{\delta ^{(1)}}} \big] \!= \!\frac{{8{\pi ^2}{{\big| {{\delta ^{(1)}}} \big|}^2}}}{{\lambda _c^2\sigma _{{e^{(1)}}}^2}}\sum\limits_{n = 1}^{{N_{\text{\text{re}}}}} {x_n^2},
			\\
			&\!\!\!\!{F_{\varphi \varphi }} \!=\! \frac{2}{{\sigma _{{e^{(1)}}}^2}}\!\mathfrak{R} \!\big[ {{\delta ^{(1)*}}{{ {\frac{{\partial  {{\bm{a}}^\text{H}_{\text{\text{re}}}}}}{{\partial \varphi }}} }}\frac{{\partial {{\bm{a}}_{\text{\text{re}}}}}}{{\partial \varphi }}{\delta ^{(1)}}} \big]\!=\!\frac{{8{\pi ^2}{{\big| {{\delta ^{(1)}}} \big|}^2}}}{{\lambda _c^2\sigma _{{e^{(1)}}}^2}}\sum\limits_{n = 1}^{{N_{\text{re}}}} {y_n^2}, \label{fuv}
			\\
			&\!\!\!\!{F_{\vartheta \!\varphi }} \!=\! \frac{2}{{\sigma _{{e^{(1)}}}^2}}\!\mathfrak{R} \!\big[ {{\delta ^{(1)*}}{{ {\frac{{\partial {{\bm{a}}^\text{H}_{\text{re}}}}}{{\partial \vartheta }}} }}\frac{{\partial {{\bm{a}}_{\text{\text{re}}}}}}{{\partial \varphi }}{\delta ^{(1)}}} \big]\!
			\!=\! \frac{8\uppi^2|{\delta}^{(1)}|^2}{\lambda_c^2\sigma_{\text{e}^{(1)}}^2}\!\sum_{n=1}^{N_{\text{\text{re}}}}\!x_n y_n\label{fvv}.
		\end{align}
		Substituting \eqref{fuu}-\eqref{fvv} into \eqref{CRBUU} and \eqref{CRBVV} yields 
		\begin{align}
			\text{CRB}_{\vartheta\vartheta} &= \Big[\frac{8\uppi^2|\delta^{(1)}|^2}{\lambda_c^2\sigma_{\text{e}^{(1)}}^2}\big(\var(\mathcal{Y}) - \frac{\cov(\mathcal{X},\mathcal{Y})}{\var(\mathcal{X})}\big)\Big]^{-1} ,\\
			\text{CRB}_{\varphi\varphi} &= \Big[\frac{8\uppi^2|\delta^{(1)}|^2}{\lambda_c^2\sigma_{\text{e}^{(1)}}^2}\big(\var(\mathcal{X}) - \frac{\cov(\mathcal{X},\mathcal{Y})}{\var(\mathcal{Y})}\big)\Big]^{-1}.\label{end}
		\end{align}
		
		Similarly, by applying the method of  \eqref{start}-\eqref{end}, we can obtain \eqref{CRBnunu} and \eqref{CRBtautau}, respectively.

		\vspace{-0.5em}
		\section{Proof of Theorem \ref{Thoerem1}}
		\label{Appendix C}
		In this appendix, we consider sequentially placing $\mathcal{T}(m)$ for $m=1,\ldots,M$ along a finite time axis. Each placement is optimized by establishing an iterative relationship between the current objective function and the next placement.
		
		\begin{figure}[t]
			\vspace{-0.5em}
		\centering
		\begin{minipage}{0.65\linewidth}
			\centering
			\includegraphics[width=\linewidth]{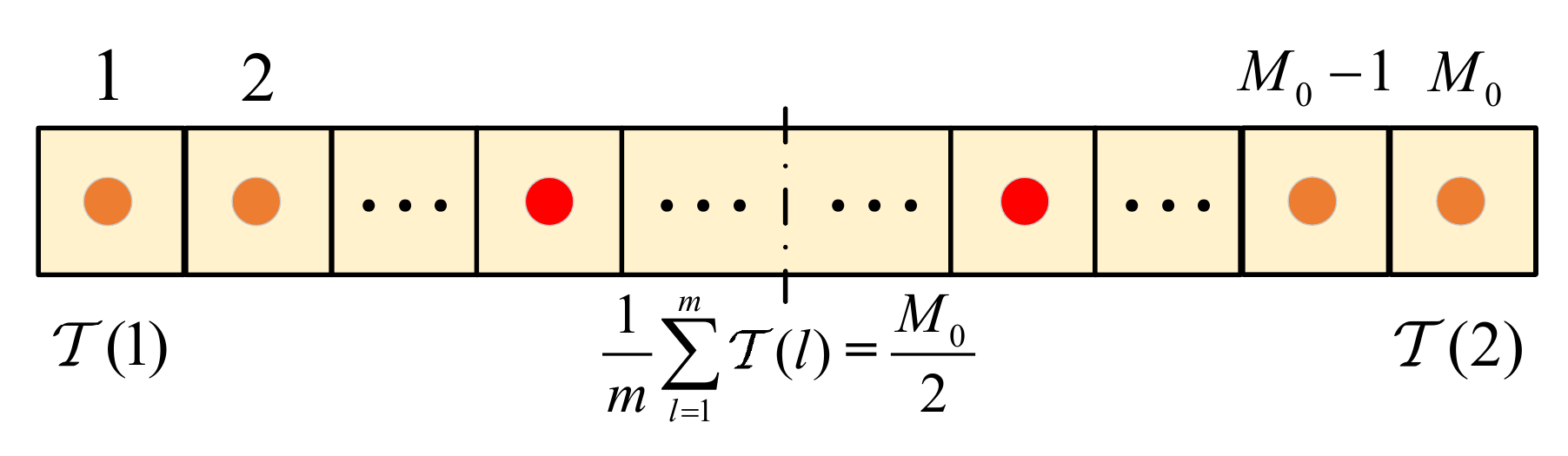}
			\subcaption{$m$ is even.}
			\label{Fig_oushu}
			\vspace{-0.5em}
		\end{minipage}
		\begin{minipage}{0.65\linewidth}
			\centering
			\includegraphics[width=\linewidth]{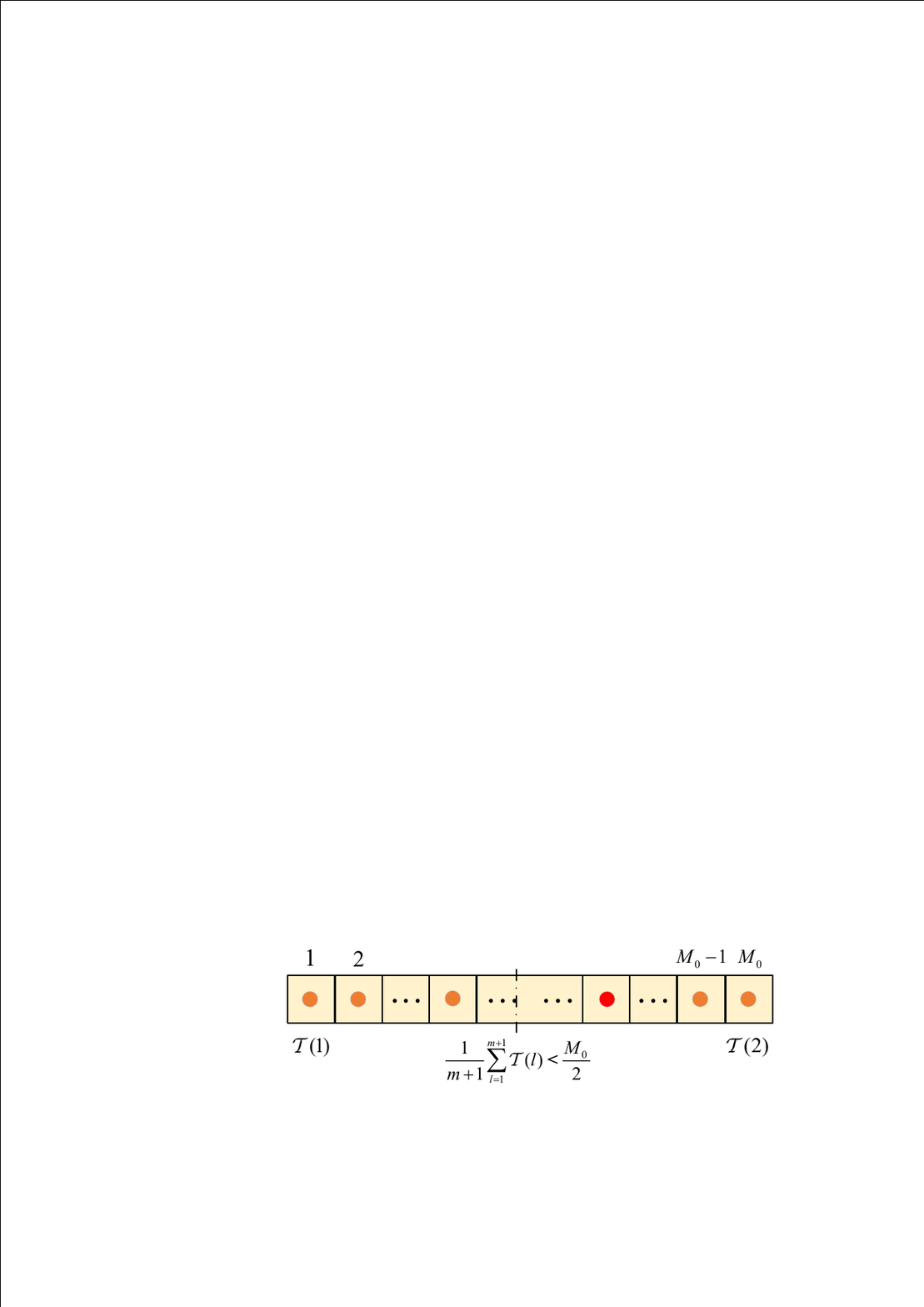}
			\subcaption{$m$ is odd.}
			\label{Fig_jishu}
		\end{minipage}
		\vspace{-0.5em}
		\caption{Illustration of the optimized design of OFDM symbols.}
		\label{Fig_optimization}
		\vspace{-0.75cm}
		\end{figure}
		
	We conceptualize a strip of length $M_0$ where each box represents a unit. The optimization problem is to place $\mathcal{T}$ to maximize the objective function $\var(\mathcal{T})$.
		
	Clearly, to maximize the objective function, the first two placements must be at the extremes $\mathcal{T}(1) = 0$ and $\mathcal{T}(2) = M_0$. Assuming $m$ points have been placed, the variance is given by
		\begin{align}
			\mathrm{var}^m = \frac{1}{m}\sum_{l=1}^m \mathcal{T}(l)^2 - \big(\frac{1}{m}\sum_{l=1}^m \mathcal{T}(l)\big)^2.
		\end{align}
		After placing the $(m+1)$-th point, the objective function becomes
		\begin{align}
			\mathrm{var}^{m+1} = \frac{1}{m+1}\sum_{l = 1}^{m+1} \mathcal{T}(l)^2 - \big(\frac{1}{m+1}\sum_{l = 1}^{m+1} \mathcal{T}(l)\big)^2.
		\end{align}
		It should be noted that
		\begin{align}
			\!\mathrm{var}^{m\!+\!1} \!=\! \frac{m}{m\!+\!1}\mathrm{var}^m \!+\! \frac{m}{(m\!+\!1)^2}\big( \mathcal{T}{(m\!+\!1)} \!-\! \frac{1}{m}\!\sum_{l = 1}^m \mathcal{T}(l) \big)^2.
		\end{align}
		For the $(m+1)$-th placement, the term $\frac{m}{m+1}\mathrm{var}^m$ is a fixed constant. To maximize the objective function, we must maximize the squared distance term, which requires placing the next OFDM symbol at the available position farthest from the current mean position $\frac{1}{m}\sum_{l = 1}^m \mathcal{T}(l)$. 
		
	Without loss of generality, we assume $m$ is even. As shown in Fig. \ref{Fig_oushu}, the mean position is exactly at the center $M_0/2$. To maximize the distance from the mean while satisfying the separation constraint $\mathcal{T}(m_1) \neq \mathcal{T}(m_2)$ for all distinct pairs, the $(m+1)$-th OFDM symbol must be placed at the available positions marked in red. Since both candidate positions are symmetric about $M_0/2$, we select the left position without loss of generality. After this placement, the number of currently placed points becomes odd, and the new mean position strictly shifts to the left. Consequently, as shown in Fig. \ref{Fig_jishu}, to maximize the variance for the $(m+2)$-th placement, the next OFDM symbol must unequivocally be placed at the rightmost available position.
		
	By recursively applying this alternating placement strategy, we guarantee that every single sequential placement strictly maximizes the variance increment. This recursive mathematical approach rigorously ensures that the final distribution of the $M$ OFDM symbols achieves the global optimality. Repeating this process yields the optimal solution
		\begin{align}
			\mathcal{T}(m)^{\star} = \begin{cases}
				1,2,\ldots,\bar{M} & m \leq \bar{M}, \\
				M_0 - \bar{M} + 1,\ldots,M_0 & m > \bar{M},
			\end{cases}
		\end{align}
		where $\bar{M} = \lfloor M/2 \rfloor$.
		
		To this end, we have completed the proof of Theorem \ref{Thoerem1}.

		
		
		%
			\vspace{-0.5em}
		\bibliographystyle{IEEEtran}
		\bibliography{mybib_Abbreviation_1}

	\end{document}